\newtheorem{theorem}{{\bf Theorem}}
\newcommand{\qed}{\nobreak \ifvmode \relax \else
  \ifdim\lastskip<1.5em \hskip-\lastskip
  \hskip1.5em plus0em minus0.5em \fi \nobreak
  \vrule height0.75em width0.5em depth0.25em\fi}
\newcounter{step}
\newlength{\totlinewidth}
  {\end{list}%
  \rule{\linewidth}{1pt}}
\newcounter{substep}
\newlength{\aligntop}
\newlength{\alignbot}
\begin{document}

% paper title
%\title{Minimum Power Allocation for Space-Time Coded Cooperative Routing in Multihop Wireless Networks\vspace*{-0.3em}}
%\title{Energy-Efficient Space-Time Coded Cooperative Routing in Outage-Restricted Multihop Wireless Networks\vspace*{-0.3em}}
%\title{Outage Margin in Cognitive Multiple Access Channel\vspace*{-0.3em}}
%\title{Cognitive Multiple-Antenna Network in Outage-Restricted Primary System}
\title{Integer Forcing-and-Forward Transceiver Design for MIMO Multi-Pair Two-Way Relaying}
% author names and affiliations
\author{\authorblockN{Seyed Mohammad Azimi-Abarghouyi, Masoumeh Nasiri-Kenari, \emph{Senior Member, IEEE}, and Behrouz Maham, \emph{Member, IEEE}}\\
    \thanks{
    Seyed Mohammad Azimi-Abarghouyi and Masoumeh Nasiri-Kenari are with Electrical Engineering Department, Sharif University of Technology, Tehran, Iran. Behrouz Maham is with School of ECE, College of Engineering, University of Tehran, Iran.
    %Niels Jernes Vej 12, DK-9220 Aalborg, Denmark.
Emails:
\protect\url{sm_azimi@ee.sharif.edu, mnasiri@sharif.edu, bmaham@ut.ac.ir}. } }

%%%%%%%%%%%%%%%%%%%%%%%%%%%%%%%%%%%%%
%\author{
%\authorblockN{Behrouz Maham$^\dag$$^\ddag$ and Are Hj{\o}rungnes$^\dag$}\vspace*{0.5em}
%\authorblockA{$^\dag$UNIK -- University Graduate Center, University of Oslo, Norway\\
%$^\ddag$Department of Electrical
%    Engineering, Stanford University, USA\\
%Email: \protect\url{bmaham@stanford.edu,
%arehj@unik.no}}\vspace*{-2.1em}
%    \thanks{This work was supported by the Research Council of Norway
%    through the project 176773/S10 entitled "Optimized Heterogeneous Multiuser MIMO Networks -- OptiMO".}%
%  }
%%%%%%%%%%%%%%%%%%%%%%%%%%%%%%%%%%%%%
% make the title area
\maketitle

\begin{abstract}
In this paper, we propose a new transmission scheme, named as Integer Forcing-and-Forward (IFF), for communications among multi-pair multiple-antenna users in which each pair exchanges their messages with the help of a single multi antennas relay in the multiple-access and broadcast phases. The proposed scheme utilizes Integer Forcing Linear Receiver (IFLR) at relay, which uses equations, i.e., linear integer-combinations of messages, to harness the intra-pair interference. Accordingly, we propose the design of mean squared error (MSE) based transceiver, including precoder and projection matrices for the relay and users, assuming that the perfect channel state information (CSI) is available. In this regards, in the multiple-access phase, we introduce two new MSE criteria for the related precoding and filter designs, i.e., the sum of the equations MSE (Sum-Equation MSE) and the maximum of the equations MSE (Max-Equation MSE), to exploit the equations in the relay. In addition, the convergence of the proposed criteria is proven as well. Moreover, in the broadcast phase, we use the two traditional MSE criteria, i.e. the sum of the users' mean squred errors (Sum MSE) and the maximum of the users' mean squared errors (Max MSE), to design the related precoding and filters for recovering relay's equations by the users. Then, we consider a more practical scenario with imperfect CSI. For this case, IFLR receiver is modified, and another transceiver design is proposed, which take into account the effect of channels estimation error. We evaluate the performance of our proposed strategy and compare the results with the conventional amplify-and-forward (AF) and denoise-and-forward (DF) strategies for the same scenario. The results indicate the substantial superiority of the proposed strategy in terms of the outage probability and the sum rate.
\end{abstract}

\section{Introduction}
Two way relaying communications have recently attracted considerable attentions due to their various applications. In this communication scenario, two users attempt to communicate with each other with the help of a relay. To this end, physical layer network coding (PLNC) [1] along with the conventional decode-and-forward (DF) or amplify-and-forward (AF) relaying strategies has been commonly employed [2-4] to improve the system throughput [5]. 

A novel relaying technique, known as compute-and-forward (CMF) [6], has been designed for multi users applications with the aim of increasing the physical layer network coding throughput. In this scheme, each relay, based on a received noisy combination of simultaneously transmitted signals of the users, attempts to recover an equation, i.e., a linear integer-combination, of users' messages, instead of recovering each individual message separately. To enable the relay to recover the equation, the CMF scheme is usually implemented based on using a proper lattice code [7]. Since the equation coefficients are selected according to the channel coefficients, this method is also called physical layer network coding [8]. The relay then transmits the decoded equation to the destination. The destination recovers the desired messages by receiving enough number of decoded equations from the relays. In fact, in contrast to conventional AF and DF relaying techniques, the CMF method exploits rather than combats the interference towards a better network performance. By applying CMF in point-to-point MIMO systems, a linear receiver, named as integer forcing linear receiver (IFLR) has been proposed in [9] in which sufficient independent equations with maximum rate are recovered to extract the users' messages.
\subsection{Motivation and Related Work}
Since the number of wireless communication users will continuously increase, independently designed one-pair two-way relay systems can scarcely accommodate a vast number of users. That is, with simultaneously transmission of pairs of users, the messages interfere with each other, and hence, arbitrary transmission and reception of the messages are not an efficient solution. To solve the problem, in [10-12], centralized designed MIMO multi-pair two-way transmission schemes with the help of a multi antenna relay have been proposed. In [10-11], the AF method has been utilized in the relay. That is, the relay simply amplifies and forwards the received signal. In [12], the DF relaying is used, as a scheme named Denoise-and-Forward, in which the relay after applying projection filter, first decodes each pair signal aligned messages individually and then precodes and transmits the decoded messages. The design criteria for precoder and projection filters in [10-12] is the minimization of the sum of the users' mean squared errors (Sum MSE). In [10], the maximization of the users' mean squared errors (Max MSE) is also considered for the transeiver design. In the simple case of single antenna one-pair two-way relay system, we have applied CMF by introducing the aligned compute-and-forward (A-CMF) scheme [13] which outperforms AF and DF based schemes significantly. 

\subsection{Contributions and Outcomes}
In this paper, we consider a more general case of two-way communications that involves multiple pairs of multiple-antenna source nodes with considering both multiple-access and broadcast phases. 

\subsubsection{Integer Forcing-and-Forward} We propose a new transmission scheme named Integer Forcing-and-Forward (IFF). We exploit the signal alignment proposed in [14-15] such that the two signals received from two users in a pair can be network coded together in the relay. Furthermore, we apply IFLR to harness the inter-pair interference in terms of equations. In the proposed scheme, the equations are decoded with higher rate than individual messages in the relay. In addition, after transmitting all recovered equations to the users, different ways to select the equations that each user needs to recover its pair's message can be utilized. Therefore, our scheme has two superiorities in comparison with the DF based scheme in [12], in which each pair message is recovered for the transmission to the respective user. 

\subsubsection{Equation Based MSE Criteria and Transceiver Design} In the proposed scheme, the precoder at transmitting nodes, including the users and the relay, and the projection filter at receiving nodes are designed based on minimizing the MSE criteria. \textit{For the first time}, we introduce the sum of the equations' mean squared errors (Sum-Equation MSE) and the maximum of the equations' mean squared errors (Max-Equation MSE) criteria for the equation recovery problem associated with the multiple-access phase precoding and filter design. These proposed Equation based MSE algorithms are proven to be convergent. Moreover, we use traditional MSE criteria, i.e. Sum MSE and Max MSE, proposed for the individual message recovery, for the broadcast phase precoding and filter design. By means of alternating optimization approach, we present tractable solutions for these MSE problems. We evaluate the performance of our proposed scheme and compare the results with those of the previous methods. Our numerical results indicate that the proposed scheme substantially outperforms the previous methods in terms of the outage probability and the network throughput. In addition, the Max based MSE precoding design, using Max-Equation MSE in multiple-access phase and Max MSE in broadcast phase, shows a better performance than Sum based MSE precoding design, using Sum-Equation MSE and Sum MSE, at the expense of more complexity. 

\subsubsection{Integer Forcing-and-Forward with Channel Estimation Error} We extend our proposed schemes for the case of imperfect channel state information (Imperfect CSI). At first, we propose Modified IFLR, taking to account the effect of channel estimation errors in the conventional IFLR receiver structure. Then, accordingly, a robust transceiver design is proposed. Simulation results show that the robust design improves the performance of the non-robust design, based on assuming the exact knowledge of CSI at the presence of the channel estimation error. 

%Our main contributions can be summarized as follows. We propose a new transmission scheme named Integer Forcing-and-Forward (IFF) for the MIMO multi-pair two way relaying systems, which based on CMF and IFLR receiver uses equations, i.e. linear integer-combinations of the messages. For our equation based scheme, we introduce equation based MSE criteria, i.e. sum of the equations MSE (Sum-Equation MSE) and maximum of the equations MSE (Max-Equation MSE). We design a transceiver for each nod, including precoding matrices for transmission and projection matrices for reception, by considering the proposed MSE criteria, Sum-Equation MSE and Max-Equation MSE, in the multipe-access phase and traditional MSE criteria, Sum MSE and Max MSE, in the broadcast phase for the IFF scheme. We modify the IFLR receiver by considering the effect of the imperfect CSI on the equation recovery. According to the modified IFLR, we present the robust transceiver design with imperfect CSI.  

The remainder of this paper is organized as follows. In Section II, the system model and the Integer Forcing-and-Forward scheme are briefly described. Section III presents the transceiver precoder and projection filters design by assuming that a perfect knowledge of CSI is available. In Section IV, the modified IFLR and related design are presented. Numerical results are given in Section V. Finally, Section VI concludes the paper.

\textbf{Notations:} The superscripts $\mathbf{v}^*$ and $\left| {\left| \mathbf{v} \right|} \right|$ stand for conjugate transposition and norm of vector $\mathbf{v}$, respectively. $\text{Tr}\left( \mathbf{A} \right)$, ${\mathbf{A}^\dag }$, and $(\mathbf{A})_i$ stand for trace, pseudo inverse, and the $i$-th column vector of matrix $\mathbf{A}$. The symbol $\left| x \right|$ is the absolute value of the scalar $x$, while ${\log ^ + }\left( x \right)$ denotes ${\rm{max}}\left\{ {\log \left( x \right),0} \right\}$. $\mathbb{E}\{\cdot\}$ is the expectation of a random variable $x$. $\mathbf{I}$ denotes identity matrix. $\text{vec}(.)$ and $\text{mat}(.)$ represent the matrix vectorization and its inverse operation, respectively. $ \otimes$ denotes the Kronecker product.

\section{System Model and Integer Forcing-and-Forward (IFF) Scheme}
We consider a MIMO multi-pair two-way relaying system with $K$ pairs, i.e., $2K$ users, and one relay R, as shown in Fig. 1. In this system, in each pair $k$, users $k$ and $\bar k \buildrel \Delta \over = \text{mod}\left( {K + k} \right)$ attempt to exchange their messages, i.e.  messages vectors $\mathbf{w}_k$ and ${\mathbf{w}_{\bar k}}$ each with dimension of $L_{k}$ by the help of the relay R. Each user $k$ exploits a lattice encoder with normalized power to project its message vector $\mathbf{w}_k$ to a length-$n$ complex-valued codeword vector  $\mathbf{s}_k$ such that ${\left| {\left| {{\mathbf{s}_k}} \right|} \right|^2} \le n$. We assume that user $k$ and relay R have $N_k$ and $N_r$ antennas, respectively. The matrix $\mathbf{H}_k$ denotes the channel matrix from user $k$ to the relay, with dimension ${N_r} \times {N_k}$. The elements of $\mathbf{H}_k$ are assumed to be independent identically distributed, i.i.d, Rayleigh variables with variance $\sigma _k^2$. User $k$ precodes its message $\mathbf{s}_k$ with matrix $\mathbf{V}_k$, with dimension ${N_k} \times {L_k}$, and transmits the precoded signal $\mathbf{x}_k=\mathbf{V}_k \mathbf{s}_k$. For each pair $k$, the following power constraint on the sum power is considered:
\begin{eqnarray}
\text{Tr}\left( {{\mathbf{V}_k}\mathbf{V}_k^*} \right) + \text{Tr}\left( {{\mathbf{V}_{\bar k}}\mathbf{V}_{\bar k}^*} \right) \le {P_k}, k = 1, \ldots ,K.
\end{eqnarray}
\begin{figure}[tb!]
\centering

\includegraphics[width =5in]{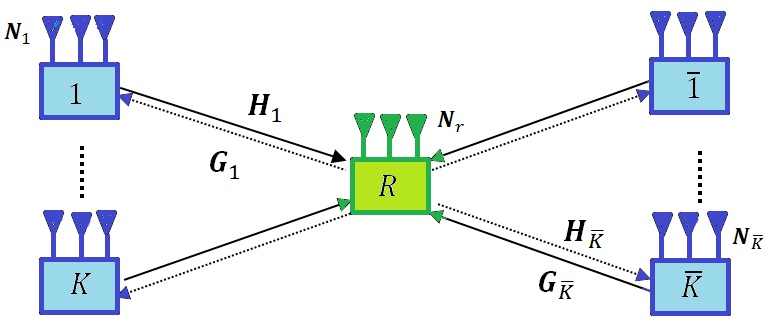}

\caption{MIMO Multi-pair Two-way Relay System}

\end{figure}
In the Integer Forcing-and-Forward (IFF) scheme, we use Multiple Access Broadcast (MABC) protocol introduced in [16]. That is, in the first time slot, named multiple access phase, the users transmit simultaneously, and therefore, the received signal by the relay R can be written as
\begin{eqnarray}
{\mathbf{y}_r} = \mathop \sum \limits_{k = 1}^{2K} {\mathbf{H}_k}{\mathbf{V}_k}{\mathbf{s}_k} + {\mathbf{z}_r} = \mathop \sum \limits_{k = 1}^K {\mathbf{H}_k}{\mathbf{V}_k}{\mathbf{s}_k} + \mathop \sum \limits_{ k = 1}^{ K} {\mathbf{H}_{\bar k}}{\mathbf{V}_{\bar k}}{\mathbf{s}_{\bar k}} + {\mathbf{z}_r},
\end{eqnarray}    
where $\mathbf{z}_r$ denotes the received noise at the relay and has Gaussian distribution with variance $\sigma _r^2$.

We use the signal alignment scheme proposed in [14-15] such that the received signals from the users in each pair $k$ to be aligned in the relay, i.e., 
\begin{eqnarray}
{\mathbf{H}_k}{\mathbf{V}_k} = {\mathbf{H}_{\bar k}}{\mathbf{V}_{\bar k}}, k = 1, \ldots ,K.
\end{eqnarray}
Hence, the user $\bar k$ precoder, ${\mathbf{V}_{\bar k}}$, versus its pair precoder, i.e., $\mathbf{V}_k$, is given by [17]
\begin{eqnarray}
{\mathbf{V}_{\bar k}} = \mathbf{H}_{\bar k}^\dag {\mathbf{H}_k}{\mathbf{V}_k},
\end{eqnarray}
where $\mathbf{H}_{\bar k}^\dag$ is the pseudo inverse of ${\mathbf{H}_{\bar k}}$. We can rewrite $\mathbf{y}_r$ as
\begin{eqnarray}
{\mathbf{y}_r} = \mathop \sum \limits_{k = 1}^K {\mathbf{H}_k}{\mathbf{V}_k}{\mathbf{\dot s}_k} + {\mathbf{z}_r},
\end{eqnarray}
where we define ${\mathbf{\dot s}_k} = {\mathbf{s}_k} + {\mathbf{s}_{\bar k}}$, named as the $k$-th pair sum message. In addition, we can rewrite $\mathbf{y}_r$ in a different form, similar to MIMO point-to-point channel, as
\begin{eqnarray}
{\mathbf{y}_r} = \mathbf{H}\mathbf{\dot S} + {\mathbf{z}_r},
\end{eqnarray}                                
where $\mathbf{\dot S} \buildrel \Delta \over = \left[ {{{\mathbf{\dot s}}_1}, \ldots ,{{\mathbf{\dot s}}_K}} \right]^*$ and
\begin{eqnarray}
\mathbf{H} \buildrel \Delta \over = \left[ {{\mathbf{H}_1}{\mathbf{V}_1}, \ldots ,{\mathbf{H}_K}{\mathbf{V}_K}} \right].
\end{eqnarray}
Moreover, the constrain in (1) can be rewritten as
\begin{eqnarray}
\text{Tr}\left( {{\mathbf{V}_k}\mathbf{V}_k^*} \right) + \text{Tr}\left( {\mathbf{H}_{\bar k}^\dag {\mathbf{H}_k}{\mathbf{V}_k}\mathbf{V}_k^*\mathbf{H}_k^*\mathbf{H}{{_{\bar k}^\dag }^*}} \right) \le {P_k}, k=1,...,K.
\end{eqnarray}
As seen in (5) and (8), the signal alignment makes it possible to consider only one user's parameters from each pair, and therefore, the MSE criterion, as will be discussed in Section III, can be more easily tractable. Note that each user in a pair can recover the other user's message, by having the related sum pair message.

In the second time slot, named broadcast phase, considering the CMF concept, we transmit equations of the users' messages rather than individual messages in the DF based scheme [12]. At first, the relay recovers $L$, $L \buildrel \Delta \over = {L_1} +  \ldots  + {L_K}$, independent equations from the received signal $\mathbf{y}_r$ by applying IFLR method [9], which is developed for MIMO channels. The $L$ independent equations, with coefficient vectors (ECVs) ${\mathbf{a}_k}, k=1,...,L$, (totally shown by matrix $\mathbf{A}$), can be solved to recover the $L$ pair sum messages, i.e. all of the pair sum messages. 

The answer of the equation with ECV $\mathbf{a}_k$, i.e. $t_k$, can be recovered by quantization of the projected received signal $\mathbf{y}_k$:
\begin{eqnarray}
{t_k} =\mathbf{a}_k^*\mathbf{\dot S}= Q\left( {\mathbf{b}_k^*{\mathbf{y}_k}} \right),
\end{eqnarray}
where $Q(.)$ denotes lattice quantizer function, and $\mathbf{b}_k$, with dimension ${N_r} \times {1}$, is the projection vector. The vector $\mathbf{b}_k$ is given by [9]
\begin{eqnarray}
\mathbf{b}_k^* = \mathbf{a}_k^*{\mathbf{H}^*}{\left( {\frac{{\sigma _r^2}}{2}\mathbf{I} + \mathbf{H}{\mathbf{H}^*}} \right)^{ - 1}}\mathbf{H}.
\end{eqnarray}
and the computation rate of this equation, i.e. the detecting rate, is given by [9]
\begin{eqnarray}
{R_k} = \log _2^ + \left( {\frac{1}{{\sigma _r^2/2 ||\mathbf{b}_k||^2+ {{\left| {\left| {{\mathbf{b}_k}\mathbf{H} - {\mathbf{a}_k}} \right|} \right|}^2}}}} \right).
\end{eqnarray}

Then, the relay puts the recovered equations in vector $\mathbf{t} = {\left[ {{t_1}, \ldots ,{t_L}} \right]^*}$. After projecting $\mathbf{t}$ with matrix $\mathbf{W}$, with diminsion ${N_r} \times L$, the relay transmits the result to the users. We consider power constraint $P_r$ for the relay transmission, i.e.,
\begin{eqnarray}
\text{Tr}\left( {\mathbf{W}{\mathbf{W}^*}} \right) \le {P_r}.
\end{eqnarray}
We assume that $\mathbf{G}_k$, with the dimension of ${N_k} \times {N_r}$, is the channel coefficient matrix from relay R to the user $k$. The elements of the matrix are assumed i.i.d Rayleigh variables with the identical variance $\sigma _k^2$.  
The received signal by each user $k$ is given by
\begin{eqnarray}
{\mathbf{y}_k} = {\mathbf{G}_k}\mathbf{W}\mathbf{t} + {\mathbf{z}_k},
\end{eqnarray}
where $\mathbf{z}_k$ denotes the receiver noise, having Gaussian distribution with variance $\sigma _u^2$.
User $k$ exploits projection filter $\mathbf{D}_k$, a matrix with dimension $L \times {N_k}$, to recover equation vector $\mathbf{t}$ using a traditional linear receiver as
\begin{eqnarray}
\mathbf{t} = Q\left( {{\mathbf{D}_k}{\mathbf{y}_k}} \right).
\end{eqnarray}
According to (13), which shows a point-to-point MIMO channel, the rate of recovering the equation $t_i$ by user $k$ is given by [20]
\begin{eqnarray}
{\tilde R_i^k} = {\rm{lo}}{{\rm{g}}_2}\left( {1 + \frac{{{{\left| {\left| {{{\left( {\mathbf{D}_k^*} \right)}_i}{{\left( {{\mathbf{G}_k}\mathbf{W}} \right)}_i}} \right|} \right|}^2}}}{{{{\left| {\left| {{\left( {\mathbf{D}_k^*} \right)}_i} \right|} \right|}^2} + \mathop \sum \nolimits_{l \ne i} {{\left| {\left| {{{\left( {\mathbf{D}_k^*} \right)}_i}{{\left( {{\mathbf{G}_k}\mathbf{W}} \right)}_l}} \right|} \right|}^2}}}} \right).
\end{eqnarray}
This achievable rate can be improved using successive interference cancellation (SIC) [21]. Therefore, the overall rate of recovering the equation with ECV $\mathbf{a}_i$, i.e. $t_i$, by user $k$ is
\begin{eqnarray}
R_k^i = {\rm{min}}\left( {{R_i},{{\tilde R}_i^k}} \right),
\end{eqnarray}
where $R_i$ is given in (11). Each user, among received equations $\mathbf{t}$, uses the best ones with the maximum overall rate that can help the user to recover its pair's messages. In comparison with the DF based scheme in [12], not only higher rate is achieved by decoding equations at the relay instead of the messages [9], but also more flexibility is provided for the users, having different ways to recover their pairs' messages according to the ECVs of the transmitted relay's equations. Please note, even in the worst case, each user can still recover its pair's messages because the relay transmits independent equations in the number of all pairs' messages.

\section{MSE based Precoding and Projection Filter Design for the Perfect Channel Knowledge}
In this section, based on the poposed IFF scheme, we investigate the transceiver design, i.e., finding precoding and projection Filter matrices for all the nodes to minimize the MSE by assuming that the perfect CSI is available. According to the proposed scheme presented in Section II, we have to select the design matrices for two phases of multiple access and broadcast, separately. First, we consider the multiple-access phase, in which the users' transmitting precoding matrix $\mathbf{V}_k$ and the relay's receiving projection matrix $\mathbf{B}$ are optimized. Similarly, in Subsection III.B, we consider the broadcast phase, and obtain the relay's precoder matrix $\mathbf{W}$ and the users' projection matrix $\mathbf{D}_k$.

\subsection{Multiple-Access Phase MSE based Precoding and Projection Filter Design}
At first, we design the related matrices in the multiple-access phase by introducing Max-Equation MSE criterion for our equation based problem, to ensure QoS equivalency between different recovered equations. However, since maybe some users do not use all of the equations to recover their pairs' messages, we introduce Sum-Equation MSE criterion, which also has less complexity at the ECV search problem, as will be discussed.

\subsubsection{Max-Equation MSE based Precoding and Projection Filter Design}
%Here, we consider minimizing the maximum equation's MMSE. The sum MMSE minimization cannot guarantee the individual equation's QoS. Therefore, we propose another transceiver design based on maximum of equation MMSE, which provide equivalency between equaions' MMSE at the optimal point.
From (6) and (9), the effective noise ${\varepsilon _k}$ in recovering the equation ${t_k} = \mathbf{a}_k^*\mathbf{\dot S}$ from the projection of the received signal $\mathbf{y}_r$ onto vector $\mathbf{b}_k$ is equal to
\begin{eqnarray}
{\varepsilon _k} = \mathbb{E}\left\{{\left| {\left| {\mathbf{b}_k^*{\mathbf{y}_r} - \mathbf{a}_k^*\mathbf{\dot S}} \right|} \right|^2}\right\}. 
\end{eqnarray}
Now, the users' precoding vectors $\mathbf{V}_k, k = 1, \ldots ,K$, equation matrix $\mathbf{A}$, and projection matrix $\mathbf{B}$, including the vectors $\mathbf{b}_k$ in (10), $k = 1, \ldots ,L$, must be selected so as to minimize the maximum effective noise of all of the $L$ recovering equations, i.e.,
\begin{eqnarray}
\mathop {\min }\limits_{{\mathbf{V}_k}, \mathbf{A}, \mathbf{B}} \mathop {\max }\limits_{i = 1, \ldots ,L} {\varepsilon _i}\nonumber
\end{eqnarray}
subject to
\begin{eqnarray}
\text{Tr}\left( {{\mathbf{V}_k}\mathbf{V}_k^*} \right) + \text{Tr}\left( {\mathbf{H}_{\bar k}^\dag {\mathbf{H}_k}{\mathbf{V}_k}\mathbf{V}_k^*\mathbf{H}_k^*\mathbf{H}{{_{\bar k}^\dag }^*}} \right) \le {P_k}, k=1,...,K.
\end{eqnarray}
where from (5) and (17), ${\varepsilon _k}$ can be expanded as
\begin{eqnarray}
{{\rm{\varepsilon }}_k} &=& \text{Tr}\left\{\left(\mathbf{b}_k^*\mathop \sum \limits_{i = 1}^K {\mathbf{H}_i}{\mathbf{V}_i}{\mathbf{\dot s}_i} + \mathbf{b}_k^*{\mathbf{z}_r} - \mathop \sum \limits_{i = 1}^K \mathbf{a}_{k,i}^*{\mathbf{\dot s}_i}\right){\left(\mathbf{b}_k^*\mathop \sum \limits_{i = 1}^K {\mathbf{H}_i}{\mathbf{V}_i}{\mathbf{\dot s}_i} + \mathbf{b}_k^*{\mathbf{z}_r} - \mathop \sum \limits_{i = 1}^K \mathbf{a}_{k,i}^*{\mathbf{\dot s}_i}\right)^*}\ \right\} \nonumber\\&=& 2\mathop \sum \limits_{i = 1}^K \left(\mathbf{b}_k^*{\mathbf{H}_i}{\mathbf{V}_i}\mathbf{V}_i^*\mathbf{H}_i^*{\mathbf{b}_k} - 2\mathbf{a}_{k,i}^*\mathbf{V}_i^*\mathbf{H}_i^*{\mathbf{b}_k} + \mathbf{a}_{k,i}^*{\mathbf{a}_{k,i}}\right)+\sigma _r^2\mathbf{b}_k^*{\mathbf{b}_k},
\end{eqnarray}
where $\mathbf{a}_{k,i}$ is the $i$-th pair coefficient of the $k$-th equation. By substituting $\mathbf{b}_k$ from (10) and with some straightforward simplifications, we can rewrite $\varepsilon _k$ as
\begin{eqnarray}
{{\rm{\varepsilon }}_k}=\mathbf{a}_k^*\mathbf{U}\mathbf{a}_k,
\end{eqnarray}
where
\begin{eqnarray}
\mathbf{U} \buildrel \Delta \over = \mathbf{I} - {\mathbf{H}^*}{\left( {\frac{{\sigma _r^2}}{2}\mathbf{I} + \mathbf{H}{\mathbf{H}^*}} \right)^{ - 1}}\mathbf{H}.
\end{eqnarray}
Using the alternative method, we solve the given optimization problem. That is, in the first step, assuming the precoding vectors are known, the matrix $\mathbf{A}_\text{opt}$ is obtained as
\begin{eqnarray}
\mathbf{A}_\text{opt}= \text{arg}\mathop {\min }\limits_{\mathbf{A} \in {\mathbb{Z}^{L \times L}}} \mathop {\max }\limits_{k = 1, \ldots ,L} \left( {\mathbf{a}_k^*\mathbf{U}{\mathbf{a}_k}} \right),
\end{eqnarray}
 subject to
\begin{eqnarray}
\left\{ {\begin{array}{*{20}{c}}{\mathbf{A} = \left[ {\begin{array}{*{20}{c}}{\mathbf{a}_1^*}\\ \vdots \\{\mathbf{a}_L^*}\end{array}} \right].}\\{\det \left( \mathbf{A} \right) \ne 0.}\\{{\mathbf{a}_k} \in {\mathbb{Z}^L}, k=1, \ldots ,L.}\end{array}} \right. \nonumber 
\end{eqnarray}
This optimization problem, named ECV search, can be solved efficiently by using the proposed schemes in [18-19].

In the second step, by substituting the values obtained at the first step for matrices $\mathbf{A}$ and $\mathbf{B}$, the precoding vectors are calculated as follows:

By introducing a new variable $x$ that serves an upper bound on $\varepsilon _i, \forall i$, the optimization problem of precoding matrices $\mathbf{V}_k, \forall k$ can be rewritten as 
\begin{eqnarray}
\mathop {\min }\limits_{\{ {\mathbf{V}_k};x\} } x, \nonumber
\end{eqnarray}
subject to
\begin{eqnarray}
\left\{ {\begin{array}{*{20}{c}}{{\varepsilon _i} \le x}\\{\text{Tr}\left( {{\mathbf{V}_k}\mathbf{V}_k^*} \right) + \text{Tr}\left( {\mathbf{H}_{\bar k}^\dag {\mathbf{H}_k}{\mathbf{V}_k}\mathbf{V}_k^*\mathbf{H}_k^*\mathbf{H}{{_{\bar k}^\dag }^*}} \right) \le {P_k}, k=1,...,K.}\end{array}} \right.
\end{eqnarray}
With definition of $\mathbb{H}$ and $\mathbb{V}$ as
\begin{eqnarray}
\mathbb{H} \buildrel \Delta \over = \left[ {\begin{array}{*{20}{c}}{{\mathbf{H}_1}}&{\begin{array}{*{20}{c}} \ldots &{{\mathbf{H}_K}}\end{array}}\end{array}} \right],
\end{eqnarray}
\begin{eqnarray}
\mathbb{V} \buildrel \Delta \over = \left( {\begin{array}{*{20}{c}}{\begin{array}{*{20}{c}}{{\mathbf{V}_1}}&0\\0&{{\mathbf{V}_2}}\end{array}}&{\begin{array}{*{20}{c}} \cdots &0\\ \cdots & \vdots \end{array}}\\{\begin{array}{*{20}{c}} \vdots & \vdots \\0&0\end{array}}&{\begin{array}{*{20}{c}} \ddots &0\\ \cdots &{{\mathbf{V}_K}}\end{array}}\end{array}} \right),
\end{eqnarray}
$\varepsilon _i$ in (19) can be rewritten as
\begin{eqnarray}
\varepsilon _i&=&\mathop \sum \limits_{i = 1}^K 2{ || {\mathbf{V}_i^*\mathbf{H}_i^*{\mathbf{b}_k} - {\mathbf{a}_{k,i}}} ||^2} + \sigma _r^2{\left| {\left| {{\mathbf{b}_k}} \right|} \right|^2}\nonumber\\&=&\left| {\left| {{\mathbb{V}^*}{\mathbb{H}^*}{\mathbf{b}_k} - {\mathbf{a}_k}} \right|} \right|^2 + \sigma _r^2{\left| {\left| {{\mathbf{b}_k}} \right|} \right|^2}.
\end{eqnarray}
Hence, with the help of equation $\text{vec}(\mathbf{XYZ}) = ({\mathbf{Z}^*} \otimes \mathbf{X})\text{vec}(\mathbf{Y})$ in [17], this leads to
\begin{eqnarray}
\varepsilon _i=\left|\left|\begin{array}{*{20}{c}}{{\sigma _r}{||\mathbf{b}_k||}}\\{(\mathbf{b}_k^*\mathbb{H} \otimes \mathbf{I})\text{vec}\left( {{\mathbb{V}^*}} \right) - {\mathbf{a}_k}}\end{array}\right|\right|^2.
\end{eqnarray}
Accordingly, the optimization problem of transmit precoding matrices can be rewritten as
\begin{eqnarray}
\mathop {\min }\limits_{\{ {\mathbf{V}_k};x\} } x, \nonumber
\end{eqnarray}
subject to
\begin{eqnarray}
\left\{ {\begin{array}{*{20}{c}}{\left|\left|\begin{array}{*{20}{c}}{{\sigma _r}||\mathbf{b}_k||}\\{(\mathbf{b}_k^*\mathbb{H} \otimes \mathbf{I})\text{vec}\left( {{\mathbb{V}^*}} \right) - {\mathbf{a}_k}}\end{array}\right|\right|^2 \le x}\\{{\left| {\left| {{\rm{\text{vec}}}\left( {{\mathbf{V}_k}} \right)} \right|} \right|^2} + {\left| {\left| {\text{vec}\left( {{\rm{ \mathbf{H}_{\bar k}^\dag {\mathbf{H}_k}{\mathbf{V}_k}}}} \right)} \right|} \right|^2} \le {P_k}, k=1,..., K.}\end{array}} \right.
\end{eqnarray}
This optimization problem is a Second Order Cone Programming (SOCP) problem [22] due to the fact that the objective function is linear and the constraints are second order cones. It can be efficiently solved by standard SOCP solver [23] or CVX, a software package that is developed for convex optimization problems. Algorithm 1 summarizes the above procedures.
\begin{theorem}
The proposed Max-Equation MSE minimization algorithm is convergent.
\end{theorem}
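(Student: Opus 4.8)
The plan is to show, in the standard fashion for alternating minimization, that Algorithm~1 produces a sequence of objective values that is monotonically nonincreasing and bounded below, and is therefore convergent. Let $\mathbf{V}_k^{(n)}$ and $\mathbf{A}^{(n)}$ be the precoders and ECV matrix at the start of outer iteration $n$, let $\mathbf{H}^{(n)}=\mathbb{H}\mathbb{V}^{(n)}$ be the corresponding effective channel in (7), let $\mathbf{U}^{(n)}$ be the matrix of (23) built from $\mathbf{H}^{(n)}$, and let $\mathbf{b}_k^{(n)}$ be the projection vectors of (10). Since $\mathbf{B}$ is always reset to the value in (10), relations (21)--(23) show that the current max-equation MSE equals $J^{(n)}=\max_{k=1,\ldots,L}(\mathbf{a}_k^{(n)})^{*}\mathbf{U}^{(n)}\mathbf{a}_k^{(n)}$, which is the quantity I will track.

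First I would isolate the three optimality facts behind the inner steps. (i) For any fixed precoders, the ECV step (22) returns an admissible integer matrix $\mathbf{A}$ minimizing $\max_k\mathbf{a}_k^{*}\mathbf{U}\mathbf{a}_k$, so it never increases $J$. (ii) The projection (10) is exactly the linear-MMSE minimizer over all vectors of the quadratic $\varepsilon_k$ in (19), and completing the square yields $\min_{\mathbf{b}_k}\varepsilon_k=\mathbf{a}_k^{*}\mathbf{U}\mathbf{a}_k$, i.e.\ (21)--(23). (iii) The precoder step solves the SOCP (30), whose objective is $\max_k\varepsilon_k$ viewed as a function of $\mathbb{V}$ (equivalently of $\mathbf{H}=\mathbb{H}\mathbb{V}$) with the vectors $\mathbf{b}_k$ frozen at $\mathbf{b}_k^{(n)}$; the incumbent $\mathbf{V}_k^{(n)}$ is feasible for that SOCP.

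The monotonicity chain then runs as follows. Evaluating the SOCP objective at the incumbent $\mathbf{V}_k^{(n)}$ gives, using $\mathbb{V}^{(n)*}\mathbb{H}^{*}=\mathbf{H}^{(n)*}$ together with (ii), exactly $J^{(n)}$; hence the SOCP optimum $\mathbf{V}_k^{(n+1)}$ attains a value $g^{(n)}=\max_k\varepsilon_k\big(\mathbf{V}^{(n+1)},\mathbf{b}_k^{(n)},\mathbf{a}_k^{(n)}\big)\le J^{(n)}$. Re-optimizing the projection for the new channel $\mathbf{H}^{(n+1)}$ only lowers each $\varepsilon_k$, so by (ii) again $\max_k(\mathbf{a}_k^{(n)})^{*}\mathbf{U}^{(n+1)}\mathbf{a}_k^{(n)}\le g^{(n)}$. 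Finally, re-running the ECV step and using (i) gives $J^{(n+1)}=\min_{\mathbf{A}}\max_k\mathbf{a}_k^{*}\mathbf{U}^{(n+1)}\mathbf{a}_k\le\max_k(\mathbf{a}_k^{(n)})^{*}\mathbf{U}^{(n+1)}\mathbf{a}_k^{(n)}$. Chaining these, $J^{(n+1)}\le J^{(n)}$. For the lower bound I would use the matrix inversion lemma to rewrite $\mathbf{U}=\big(\mathbf{I}+(2/\sigma_r^2)\mathbf{H}^{*}\mathbf{H}\big)^{-1}$, which is positive definite, together with the fact that every admissible $\mathbf{A}$ is nonsingular, so that $\mathbf{a}_k\ne\mathbf{0}$ and $J^{(n)}>0$ for all $n$. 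A nonincreasing real sequence bounded below converges, which proves the theorem.

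The step requiring care is item (iii): the SOCP is solved with the \emph{stale} projection vectors $\mathbf{b}_k^{(n)}$ rather than with the projection induced by the updated precoders, so its optimal value only upper-bounds the true max-equation MSE. The argument closes precisely because (ii) is invoked twice --- once to identify the SOCP value at the incumbent with $J^{(n)}$, and once to descend from the stale-projection value $g^{(n)}$ to $\max_k\mathbf{a}_k^{*}\mathbf{U}^{(n+1)}\mathbf{a}_k$ --- so that the frozen-projection SOCP is sandwiched between two genuine values of the alternating objective. I would also note that this establishes convergence of the objective sequence $\{J^{(n)}\}$, not necessarily of the iterates $(\mathbf{V}_k,\mathbf{A},\mathbf{B})$ themselves, which is all the statement claims.
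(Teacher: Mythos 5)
Your proposal is correct and follows essentially the same route as the paper's proof: a monotone-descent argument for the alternating updates combined with a lower bound on the objective, yielding convergence of the objective sequence. The paper states this in two lines ($\varepsilon$ nonincreasing across both steps and bounded below by zero); your version merely fills in the details the paper leaves implicit, in particular the double use of the MMSE-optimality of (10) to handle the fact that the SOCP is solved with frozen projection vectors.
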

\begin{proof}
Let $\varepsilon  = \mathop {\max }\limits_{i = 1, \ldots ,L} {\varepsilon _i}$, the overall MSE. Thus, in the first step of Algorithm 1 for the iteration $j+1$, we have $\varepsilon({\mathbf{A}^{\left( {j + 1} \right)}},{\mathbf{B}^{\left( {j + 1} \right)}},{\mathbf{V}^{\left( {j} \right)}}) \le \varepsilon({\mathbf{A}^{\left( {j} \right)}},{\mathbf{B}^{\left( {j} \right)}},{\mathbf{V}^{\left( {j} \right)}})$, and in the second step, $\varepsilon({\mathbf{A}^{\left( {j + 1} \right)}},{\mathbf{B}^{\left( {j + 1} \right)}},{\mathbf{V}^{\left( {j+1} \right)}})\le \varepsilon({\mathbf{A}^{\left( {j+1} \right)}},{\mathbf{B}^{\left( {j+1} \right)}},{\mathbf{V}^{\left( {j} \right)}})$. Hence, $\varepsilon({\mathbf{A}^{\left( {j + 1} \right)}},{\mathbf{B}^{\left( {j + 1} \right)}},{\mathbf{V}^{\left( {j+1} \right)}}) \le \varepsilon({\mathbf{A}^{\left( {j} \right)}},{\mathbf{B}^{\left( {j} \right)}},{\mathbf{V}^{\left( {j} \right)}})$ at the end of iteration $j+1$. Therefore, in each iteration, the overall MSE decreases, which is lower bounded by zero. Hence, the proposed Max-MSE minimization algorithm is convergent.
\end{proof}
\begin{table}[t]%[t]
    %\tiny
  \centering
  \caption{%\mycaption{%\vspace*{-1em}
    Algorithm 1: Multiple-Access Phase Max-Equation based MSE Precoding and Projection Filter Design}
    \vspace*{-1em}
    \begin{tabular}{l}
      \hline
     Initialize $\mathbf{V}_k^{\left( 0 \right)}, \forall k$ and $\delta $
\\
Iterate
\\
\hspace*{+8pt}1.ECV search: update ${\mathbf{A}^{\left( {j + 1} \right)}}$ and ${\mathbf{B}^{\left( {j + 1} \right)}}$ from (22) and (10) for fixed $\mathbf{V}_k^{\left( j \right)}, \forall k$
\\
\hspace*{+8pt}2.Update $\mathbb{V}^{(j+1)}$, i.e. $\mathbf{V}_k^{\left( j+1 \right)}, \forall k$, by solving the SOCP problem of (28) for fixed ${\mathbf{A}^{\left( {j + 1} \right)}}$ and ${\mathbf{B}^{\left( {j + 1} \right)}}$ 
\\
Until $\text{Tr}\left( {\left( {\mathbf{V}_k^{\left( {j + 1} \right)} - \mathbf{V}_k^{\left( j \right)}} \right){{\left( {\mathbf{V}_k^{\left( {j + 1} \right)} - \mathbf{V}_k^{\left( j \right)}} \right)}^*}} \right) \le \delta, \forall k$ 

    \vspace*{.5em}
\\

      \hline
    \end{tabular}

\end{table}
\subsubsection{Sum-Equation MSE based Precoding and Projection Filter Design}
The optimization problem, which minimizes the total effective noise from all of the $L$ recovering equations, can be considered as
\begin{eqnarray}
\mathop {\min }\limits_{{\mathbf{V}_k},\mathbf{A},\mathbf{B}} \ \varepsilon=\mathop \sum \limits_{i = 1}^L {\varepsilon _i} = \mathop \sum \limits_{i = 1}^L \mathbb{E}\left\{ {\left| {\left| {\mathbf{b}_i^*{\mathbf{y}_r} - \mathbf{a}_i^*\mathbf{\dot S}} \right|} \right|^2}\right\},\nonumber
\end{eqnarray}
subject to
\begin{eqnarray}
\text{Tr}\left( {{\mathbf{V}_k}\mathbf{V}_k^*} \right) + \text{Tr}\left( {\mathbf{H}_{\bar k}^\dag {\mathbf{H}_k}{\mathbf{V}_k}\mathbf{V}_k^*\mathbf{H}_k^*\mathbf{H}{{_{\bar k}^\dag }^*}} \right) \le {P_k}, k=1,...,K.
\end{eqnarray}
where
\begin{eqnarray}
\varepsilon = \mathop \sum \limits_{k = 1}^L \mathop \sum \limits_{i = 1}^K 2(\mathbf{b}_k^*{\mathbf{H}_i}{\mathbf{V}_i}\mathbf{V}_i^*\mathbf{H}_i^*{\mathbf{b}_k} - 2\mathbf{a}_{k,i}^*\mathbf{V}_i^*\mathbf{H}_i^*{\mathbf{b}_k} + \mathbf{a}_{k,i}^*{\mathbf{a}_{k,i}})+ \sigma _r^2\mathbf{b}_k^*{\mathbf{b}_k},
\end{eqnarray}
and according to (20), we have
\begin{eqnarray}
\varepsilon = \mathop \sum \limits_{k = 1}^L \mathbf{a}_k^*\mathbf{U}\mathbf{a}_k.
\end{eqnarray}
We can rewrite (31) in a simpler form as
\begin{eqnarray}
\varepsilon = \text{Tr}(\mathbf{A}\mathbf{U}\mathbf{A}^*).\nonumber
\end{eqnarray}
Again, we solve this problem by using the alternative method. In the first step, the matrix $\mathbf{A}_\text{opt}$ is obtained as
\begin{eqnarray}
\mathbf{A}_\text{opt}= \text{arg}\mathop {\min }\limits_{\mathbf{A} \in {\mathbb{Z}^{L \times L}}} \text{Tr}(\mathbf{A}\mathbf{U}\mathbf{A}^*)
\end{eqnarray}
 subject to
\begin{eqnarray}
\left\{ {\begin{array}{*{20}{c}}{\mathbf{A} = \left[ {\begin{array}{*{20}{c}}{\mathbf{a}_1^*}\\ \vdots \\{\mathbf{a}_L^*}\end{array}} \right].}\\{\det \left( \mathbf{A} \right) \ne 0.}\\{{\mathbf{a}_k} \in {\mathbb{Z}^L}, k=1, \ldots ,L.}\end{array}} \right. \nonumber 
\end{eqnarray}
We can solve this problem by using the proposed schemes in [18-19] with some straightforward changes. However, since we can optimize $\mathbf{A}$ at once, this problem is significantly more simple and tractable than (22).

In the second step, the precoding vectors can be calculated as follows. 
The KKT conditions for the $k$-th pair precoder $\mathbf{V}_k$ can be written as
\begin{eqnarray}
{\nabla _{{\mathbf{V}_k}}}\varepsilon  + {\mu _k}{\nabla _{{\mathbf{V}_k}}}\left\{ {\text{Tr}\left( {{\mathbf{V}_k}\mathbf{V}_k^*} \right) + \text{Tr}\left( {\mathbf{H}_{\bar k}^\dag {\mathbf{H}_k}{\mathbf{V}_k}\mathbf{V}_k^*\mathbf{H}_k^*\mathbf{H}{{_{\bar k}^\dag }^*}} \right) - {P_k}} \right\}=0,
\end{eqnarray}
\begin{eqnarray}
{\mu _k}\left( {\text{Tr}\left( {{\mathbf{V}_k}\mathbf{V}_k^*} \right) + \text{Tr}\left( {\mathbf{H}_{\bar k}^\dag {\mathbf{H}_k}{\mathbf{V}_k}\mathbf{V}_k^*\mathbf{H}_k^*\mathbf{H}{{_{\bar k}^\dag }^*}} \right) - {P_k}} \right) = 0, k = 1, \ldots ,K,
\end{eqnarray}
where ${\mu _k}$ is the KKT coefficient related to the pair $k$. From (30) and (33), we obtain
\begin{eqnarray}
2\mathop \sum \limits_{i = 1}^L (\mathbf{H}_k^*{\mathbf{b}_i}\mathbf{b}_i^*{\mathbf{H}_k}{\mathbf{V}_k} - \mathbf{H}_k^*{\mathbf{b}_i}\mathbf{a}_{k,i}^*) + {\mu _k}\left( {{\mathbf{V}_k} + \mathbf{H}_k^*\mathbf{H}{{_{\bar k}^\dag }^*}\mathbf{H}_{\bar k}^\dag {\mathbf{H}_k}{\mathbf{V}_k}} \right)=0.
\end{eqnarray}
Hence, we have 
\begin{eqnarray}
\mathbf{V}_k ={\left(\mathbf{H}_k^*\mathop \sum \limits_{i = 1}^L {\mathbf{b}_i}\mathbf{b}_i^*{\mathbf{H}_k} + \frac{1}{2}{\mu _k}\left( {\mathbf{I} + \mathbf{H}_k^*\mathbf{H}{{_{\bar k}^\dag }^*}\mathbf{H}_{\bar k}^\dag {\mathbf{H}_k}} \right)\right)^{ - 1}}\mathop \sum \limits_{i = 1}^L \mathbf{H}_k^*{\mathbf{b}_i}\mathbf{a}_{k,i}^*.
\end{eqnarray}
From (4), its pair can be calculated as 
\begin{eqnarray}
{\mathbf{V}_{\bar k}} = \mathbf{H}_{\bar k}^\dag {\mathbf{H}_k}{\left(\mathbf{H}_k^*\mathop \sum \limits_{i = 1}^L {\mathbf{b}_i}\mathbf{b}_i^*{\mathbf{H}_k} + \frac{1}{2}{\mu _k}\left( {\mathbf{I} + \mathbf{H}_k^*\mathbf{H}{{_{\bar k}^\dag }^*}\mathbf{H}_{\bar k}^\dag {\mathbf{H}_k}} \right)\right)^{ - 1}}\mathop \sum \limits_{i = 1}^L \mathbf{H}_k^*{\mathbf{b}_i}\mathbf{a}_{k,i}^*.
\end{eqnarray}
Here, ${\mu _k}$ is determined from the second KKT condition given in (34). We consider two cases, namely  ${\mu _k}=0$  and ${\mu _k}>0$ as mentioned in [22, pp. 243]. If ${\mu _k}=0$ , or in other words when the optimum solution is in the feasible region, we should have $\text{Tr}\left( {{\mathbf{V}_k}\left( 0 \right)\mathbf{V}_k^*\left( 0 \right)} \right) + \text{Tr}\left( {{\mathbf{V}_{\bar k}}\left( 0 \right)\mathbf{V}_{\bar k}^*\left( 0 \right)} \right) \le {P_k}$. On the other hand, if ${\mu _k}>0$ , or equivalently, when the optimum solution is on the constraint border, we have $\text{Tr}\left( {{\mathbf{V}_k}\left( {{\mu _k}} \right)\mathbf{V}_k^*\left( {{\mu _k}} \right)} \right) + \text{Tr}\left( {{\mathbf{V}_{\bar k}}\left( {{\mu _k}} \right)\mathbf{V}_{\bar k}^*\left( {{\mu _k}} \right)} \right) - {P_k} = 0$. For the latter case, we can find ${\mu _k}>0$ efficiently by applying the bisection optimization method [22]. The above procedures are summarized in Algorithm 2. The parameter $\delta$ used in the algorithm determines the convergence tolerance.
\begin{theorem}
The proposed Sum-Equation MSE minimization algorithm is convergent.
\end{theorem}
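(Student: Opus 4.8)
The plan is to repeat, essentially verbatim, the monotone-convergence argument used for Theorem~1, the only change being that the objective is now the additive total MSE $\varepsilon=\sum_{i=1}^{L}\varepsilon_i$ of (30) rather than $\max_i\varepsilon_i$. I would write $\varepsilon(\mathbf{A},\mathbf{B},\mathbf{V})$ for that quantity, viewed as a function of the equation matrix, the projection matrix (the collection of the $\mathbf{b}_k$), and the precoders, and recall that for fixed $\mathbf{A}$ and $\mathbf{V}$ the choice (10) is exactly the projection matrix minimizing $\varepsilon$, attaining the value $\text{Tr}(\mathbf{A}\mathbf{U}\mathbf{A}^*)$ with $\mathbf{U}$ as in (21) (cf. (31)). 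The statement then follows once each of the two steps of Algorithm~2 is shown not to increase $\varepsilon$, since a non-increasing sequence bounded below by $0$ converges.

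For Step~1 of iteration $j+1$ (the ECV search (32) followed by the filter update (10)) I would argue as follows: $\mathbf{A}^{(j)}$ is a feasible point of (32) (a full-rank integer matrix), and $\mathbf{A}^{(j+1)}$ is by construction a global minimizer of $\text{Tr}(\mathbf{A}\mathbf{U}\mathbf{A}^*)$ over that feasible set with $\mathbf{U}$ evaluated at $\mathbf{V}^{(j)}$. Hence $\varepsilon(\mathbf{A}^{(j)},\mathbf{B}^{(j)},\mathbf{V}^{(j)}) \ge \text{Tr}(\mathbf{A}^{(j)}\mathbf{U}\mathbf{A}^{(j)*}) \ge \text{Tr}(\mathbf{A}^{(j+1)}\mathbf{U}\mathbf{A}^{(j+1)*}) = \varepsilon(\mathbf{A}^{(j+1)},\mathbf{B}^{(j+1)},\mathbf{V}^{(j)})$, where the first inequality is the optimality of the MMSE projection (10) and the last equality holds because $\mathbf{B}^{(j+1)}$ is produced by (10) from $\mathbf{A}^{(j+1)}$ and $\mathbf{V}^{(j)}$.

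For Step~2 of iteration $j+1$ (the closed-form precoder updates (36)--(37) obtained from the KKT system (33)--(34)) I would note that, with $\mathbf{A}^{(j+1)}$ and $\mathbf{B}^{(j+1)}$ held fixed, the objective (30) decomposes into a sum of per-pair terms, each a convex quadratic in $\mathbf{V}_k$, and each per-pair power constraint (8) is convex; hence the KKT conditions are sufficient for global optimality, so (36)--(37), with $\mu_k$ selected from the complementary-slackness condition (34) by bisection, produces the $\mathbf{V}^{(j+1)}$ that globally minimizes $\varepsilon(\mathbf{A}^{(j+1)},\mathbf{B}^{(j+1)},\mathbf{V})$ over the feasible set. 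Since $\mathbf{V}^{(j)}$ is itself feasible (the power constraints do not involve $\mathbf{A}$ or $\mathbf{B}$), $\varepsilon(\mathbf{A}^{(j+1)},\mathbf{B}^{(j+1)},\mathbf{V}^{(j+1)}) \le \varepsilon(\mathbf{A}^{(j+1)},\mathbf{B}^{(j+1)},\mathbf{V}^{(j)})$. Chaining the two inequalities yields $\varepsilon(\mathbf{A}^{(j+1)},\mathbf{B}^{(j+1)},\mathbf{V}^{(j+1)}) \le \varepsilon(\mathbf{A}^{(j)},\mathbf{B}^{(j)},\mathbf{V}^{(j)})$, so the total-MSE sequence is non-increasing and bounded below by $0$, hence convergent.

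I expect the only delicate point --- and it is minor --- to be the bookkeeping in Step~1: one must \emph{not} assert the equality $\varepsilon(\mathbf{A}^{(j)},\mathbf{B}^{(j)},\mathbf{V}^{(j)})=\text{Tr}(\mathbf{A}^{(j)}\mathbf{U}\mathbf{A}^{(j)*})$, since $\mathbf{B}^{(j)}$ is the MMSE projection matched to $\mathbf{V}^{(j-1)}$, not to $\mathbf{V}^{(j)}$; only the inequality above is available, and it is exactly what the argument needs. Everything else mirrors, step for step, the proof of Theorem~1; in particular, as there, convergence is claimed for the sequence of MSE values and not for the iterates themselves.
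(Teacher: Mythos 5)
Your proof is correct and follows the same route as the paper, whose own proof of this theorem is just the one-line remark that it is ``similar to the one given for Theorem 1,'' i.e.\ the monotone non-increase of the objective over the two alternating steps combined with the lower bound of zero. Your write-up simply fills in the justifications (optimality of the MMSE projection (10), global optimality of the ECV search, convexity of the per-pair precoder subproblem) that the paper leaves implicit, including the correct observation that only an inequality, not an equality, relates $\varepsilon(\mathbf{A}^{(j)},\mathbf{B}^{(j)},\mathbf{V}^{(j)})$ to $\text{Tr}(\mathbf{A}^{(j)}\mathbf{U}\mathbf{A}^{(j)*})$.
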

\begin{proof}
The proof is similar to the one given for Theorem 1.
\end{proof}
%\begin{theorem}
%The proposed MMSE optimization algorithm is convergent.
%\end{theorem}
%\begin{proof}
%The proof is given in Appendix I.
%\end{proof}
\begin{table}[t]%[t]
    %\tiny
  \centering
  \caption{%\mycaption{%\vspace*{-1em}
    Algorithm 2: Multiple-Access Phase Sum-Equation MSE based Precoding and Projection Filter Design}
    \vspace*{-1em}
    \begin{tabular}{l}
      \hline
      Initialize $\mathbf{V}_k^{\left( 0 \right)}, \forall k$ and $\delta $
\\
Iterate
\\
\hspace*{+8pt}1.ECV search: update ${\mathbf{A}^{\left( {j + 1} \right)}}$ and ${\mathbf{B}^{\left( {j + 1} \right)}}$ from (32) and (10) for fixed $\mathbf{V}_k^{\left( j \right)}, \forall k$
\\
\hspace*{+8pt}2.Update $\mathbf{V}_k^{\left( j+1 \right)}, \forall k$ with finding $\mu _k^{\left( {j + 1} \right)}$ for fixed ${\mathbf{A}^{\left( {j + 1} \right)}}$ and ${\mathbf{B}^{\left( {j + 1} \right)}}$ 
\\
Until $\text{Tr}\left( {\left( {\mathbf{V}_k^{\left( {j + 1} \right)} - \mathbf{V}_k^{\left( j \right)}} \right){{\left( {\mathbf{V}_k^{\left( {j + 1} \right)} - \mathbf{V}_k^{\left( j \right)}} \right)}^*}} \right) \le \delta, \forall k$ 
\vspace*{.5em}
   \\  
      \hline
    \end{tabular}

\end{table}

\subsection{Broadcast Phase MSE based Precoding and Projection Filter Design}
In the broadcast phase, for recovering the transmitted equations $\mathbf{t}$ in each user, we use traditional Sum MSE and Max MSE to design the related matrices. At first, we consider Sum MSE criterion for this phase precoding and filter design. As Sum MSE can be unfair at recovery of the transmitted equations in different users, we use Max MSE to guarantee the QoS of each user, as well.
\subsubsection{Sum MSE based Precoding and Projection Filter Design}
From (13) and (14), the decoding noise $\tilde \varepsilon _k$ for recovering the equation vector $\mathbf{t}$ by each user $k$ is equal to:
\begin{eqnarray}
{\tilde \varepsilon _k} = \mathbb{E}\left\{ {{{\left| {\left| {{\mathbf{D}_k}{\mathbf{y}_k} - \mathbf{t}} \right|} \right|}^2}} \right\}.
\end{eqnarray}
Now, the relay's precoder matrix $\mathbf{W}$ and users' projecting vectors $\mathbf{D}_k$ are selected in order to minimize the total decoding noises due to all users as:
\begin{eqnarray}
\mathop {\min }\limits_{\mathbf{W},{\mathbf{D}_k}}\tilde \varepsilon= \mathop \sum \limits_{k = 1}^{2K} {\tilde \varepsilon _k} = \mathop \sum \limits_{k = 1}^{2K} \mathbb{E}\left\{ {{{\left| {\left| {{\mathbf{D}_k}{\mathbf{y}_k} - \mathbf{t}} \right|} \right|}^2}} \right\},\nonumber
\end{eqnarray}
subject to
\begin{eqnarray}
\text{Tr}\left( {\mathbf{W}{\mathbf{W}^*}} \right) \le {P_r}.
\end{eqnarray}
Again, this problem is solved by the alternative optimization method. In the first step, assuming the relay precoder matrix $\mathbf{W}$ is known, from (39), the users' projection vectors $\mathbf{D}_k$s are obtained as follows:
\begin{eqnarray}
\mathop {\min }\limits_{{\mathbf{D}_k}} \tilde \varepsilon ,
\end{eqnarray} 
where with some simplifications, we have                                
\begin{eqnarray}
\tilde \varepsilon  = \mathop \sum \limits_{k = 1}^{2K} \text{Tr}\left\{ {{\mathbf{D}_k}{\mathbf{G}_k}\mathbf{W}{\mathbf{W}^*}\mathbf{G}_k^*\mathbf{D}_k^* - 2{\mathbf{W}^*}\mathbf{G}_k^*\mathbf{D}_k^* + \mathbf{I} + \sigma _k^2{\mathbf{D}_k}\mathbf{D}_k^*} \right\}.
\end{eqnarray}
Considering the KKT condition as
\begin{eqnarray}
{\nabla _{{\mathbf{D}_k}}}{\rm{\varepsilon }} = {\mathbf{D}_k}{\mathbf{G}_k}\mathbf{W}{\mathbf{W}^*}\mathbf{G}_k^* - {\mathbf{W}^*}\mathbf{G}_k^* + \sigma _k^2{\mathbf{D}_k} = 0,
\end{eqnarray}
the optimum value for $\mathbf{D}_k$ is obtained as
\begin{eqnarray}
{\mathbf{D}_k} = {\mathbf{W}^*}\mathbf{G}_k^*{\left( {{\mathbf{G}_k}\mathbf{W}{\mathbf{W}^*}\mathbf{G}_k^* + \sigma _k^2\mathbf{I}} \right)^{ - 1}}.
\end{eqnarray}
In the second step, by substituting $\mathbf{D}_k$, computed in (43), into (41), the relay precoder matrix is calculated as follows:
\begin{eqnarray}
\mathop {\min }\limits_\mathbf{W} \tilde \varepsilon, \nonumber
\end{eqnarray}
subject to 
\begin{eqnarray}
\text{Tr}\left( {\mathbf{W}{\mathbf{W}^*}} \right) \le {P_r}.
\end{eqnarray}
The KKT conditions of this problem with respect to the relay's precoder $\mathbf{W}$ are represented as
\begin{eqnarray}
{\nabla _{{\mathbf{V}_k}}}\tilde \varepsilon  + \rho {\nabla _{{\mathbf{V}_k}}}\left\{ {\text{Tr}\left( {\mathbf{W}{\mathbf{W}^*}} \right) - {P_r}} \right\}=0
\end{eqnarray}
\begin{eqnarray}
\rho \left( {\text{Tr}\left( {\mathbf{W}{\mathbf{W}^*}} \right) - {P_r}} \right) = 0,
\end{eqnarray}
where $\rho$ denotes KKT coefficient related to the relay. From (41) and (45), we obtain
\begin{eqnarray}
\mathop \sum \limits_{k = 1}^{2K} (\mathbf{G}_k^*\mathbf{D}_k^*{\mathbf{D}_k}{\mathbf{G}_k}\mathbf{W} - \mathbf{G}_k^*\mathbf{D}_k^*) + \rho \mathbf{W}=0
\end{eqnarray}
Hence, we have
\begin{eqnarray}
\mathbf{W} = {\left(\mathop \sum \limits_{k = 1}^{2K} \mathbf{G}_k^*\mathbf{D}_k^*{\mathbf{D}_k}{\mathbf{G}_k} + \rho \mathbf{I}\right)^{ - 1}}\mathop \sum \limits_{k = 1}^{2K} \mathbf{G}_k^*\mathbf{D}_k^*,
\end{eqnarray}
where $\rho$ is determined to satisfy the second KKT condition in (46) similar to the steps taken to select ${\mu _k}$ in Subsection III.A.2. Algorithm 3 presents the broadcast phase Sum MSE precoding and filter design. 
\begin{table}[t]%[t]
    %\tiny
  \centering
  \caption{%\mycaption{%\vspace*{-1em}
    Algorithm 3: Broadcast Phase Sum based MSE Precoding and Projection Filter Design}
    \vspace*{-1em}
    \begin{tabular}{l}
      \hline
    Initialize ${\mathbf{W}^{\left( 0 \right)}}$ and $\delta $
\\
Iterate
\\
\hspace*{+8pt}1.Update $\mathbf{D}_k^{\left( {j + 1} \right)}, \forall k$ for fixed ${\mathbf{W}^{\left( j \right)}}$
\\
\hspace*{+8pt}2.Update ${\mathbf{W}^{\left( {j+1} \right)}}$ with finding ${\rho ^{\left( {j + 1} \right)}}$ for fixed $\mathbf{D}_k^{\left( {j + 1} \right)}, \forall k$
%\vspace{+10}
%\\\hspace*{+30}for $(\frac{{{\beta _i}{h_{ki}}{a_{ki}}}}{{\beta _i^2h_{ki}^2 }})^2+(\frac{{{\beta _i}{h_{ki}}{a_{ki}}}}{{\beta _i^2h_{ki}^2 }})^2 < {P_1}+{P_2}$ \( \Rightarrow \hspace*{+40} $\({\alpha _k} = \sqrt {{P_k}} \(,\forall k = 1,2$
%\vspace{+4}
%\\\hspace*{+30}and for $\({\mu _i}\( \ge 0$ (that can be acheived by bisection \hspace*{+28} method) \( \Rightarrow $\({\alpha _k} = \frac{{{\beta _i}{h_{ki}}{a_{ki}}}}{{\beta _i^2h_{ki}^2 + {\mu _{i}}}}\(,\forall k = 1,2$

\\
Until $\text{Tr}\left( {\left( {{\mathbf{W}^{\left( {j + 1} \right)}} - {\mathbf{W}^{\left( j \right)}}} \right){{\left( {{\mathbf{W}^{\left( {j + 1} \right)}} - {\mathbf{W}^{\left( j \right)}}} \right)}^*}} \right) \le \delta$ 

    \vspace*{.5em}
\\

      \hline
    \end{tabular}

\end{table}

\subsubsection{Max MSE based Precoding and Projection Filter Design}
Here, the optimization problem, which minimizes the maximum of mean squared error of each user, can be considered as
\begin{eqnarray}
\mathop {\min }\limits_{{\mathbf{D}_k}, \mathbf{W}} \mathop {\max }\limits_{k = 1, \ldots ,2K} {\tilde \varepsilon _k},\nonumber
\end{eqnarray}
subject to
\begin{eqnarray}
\text{Tr}\left( {\mathbf{W}{\mathbf{W}^*}} \right) \le {P_r},
\end{eqnarray}
where with straightforward simplifications like III.A.1, ${\tilde \varepsilon _k}$ is given by
\begin{eqnarray}
\tilde \varepsilon _k={\left| {\left| {\text{vec}\left( {{\mathbf{D}_k}{\mathbf{G}_k}\mathbf{W}} \right) - \text{vec}\left( \mathbf{I} \right)} \right|} \right|^2} + \sigma _k^2{\left| {\left| {\text{vec}\left( {{\mathbf{D}_k}} \right)} \right|} \right|^2}.
\end{eqnarray}
Similar to III.A.1, this problem can be solved by the alternative optimization method. In the first step, for $\mathbf{W}$, we consider
\begin{eqnarray}
\mathop {\min }\limits_{\{ {\mathbf{W}},x\} } x, \nonumber
\end{eqnarray}
subject to
\begin{eqnarray}
\left\{ {\begin{array}{*{20}{c}}{\left|\left|\begin{array}{*{20}{c}}{{\sigma _k}{||\text{vec}(\mathbf{D}}_k)||}\\{\left( {\mathbf{I} \otimes {\mathbf{D}_k}{\mathbf{G}_k}} \right){\rm{\text{vec}}}\left( {\rm{\mathbf{W}}} \right) - {\rm{\text{vec}}}\left( {\rm{\mathbf{I}}} \right)}\end{array}\right|\right|^2 \le x}\\{{\left| {\left| {\text{vec}\left( \mathbf{W} \right)} \right|} \right|^2} \le {P_r}, k=1,..., 2K.}\end{array}} \right. 
\end{eqnarray}
and in the second step, for $\mathbf{D}_k, \forall k$, we consider
\begin{eqnarray}
\mathop {\min }\limits_{\{ {\mathbf{D}_k};x\} } x, \nonumber
\end{eqnarray}
subject to
\begin{eqnarray}
\left\{ {\begin{array}{*{20}{c}}{\left|\left|\begin{array}{*{20}{c}}{{\sigma _k}{||\text{vec}(\mathbf{D}}_k)||}\\{\left( {{\mathbf{W}^*}{\mathbf{G}_k^*}} \otimes \mathbf{I} \right){\rm{\text{vec}}}\left( {\rm{\mathbf{D}_k}} \right) - {\rm{\text{vec}}}\left( {\rm{\mathbf{I}}} \right)}\end{array}\right|\right|^2 \le x}\end{array}} \right. .
\end{eqnarray}
The above optimization problems are SOCP. Thus, they can be solved by standard SOCP solver. However, it is clear that the answer of (52) is equal to (43). The procedure is shown in algorithm 4.
\begin{table}[t]%[t]
    %\tiny
  \centering
  \caption{%\mycaption{%\vspace*{-1em}
    Algorithm 4: Broadcast Phase Max MSE based Precoding and Projection Filter Design}
    \vspace*{-1em}
    \begin{tabular}{l}
      \hline
   Initialize ${\mathbf{W}^{\left( 0 \right)}}$ and $\delta $
\\
Iterate
\\
\hspace*{+8pt}1.Update ${\mathbf{W}^{\left( {j+1} \right)}}$ by solving SOCP problem of (51)  for fixed $\mathbf{D}_k^{\left( {j} \right)}, \forall k$
\\
\hspace*{+8pt}2.Update $\mathbf{D}_k^{\left( {j + 1} \right)}, \forall k$ by solving SOCP problem of (52) for fixed ${\mathbf{W}^{\left( {j+1} \right)}}$
%\vspace{+10}
%\\\hspace*{+30}for $(\frac{{{\beta _i}{h_{ki}}{a_{ki}}}}{{\beta _i^2h_{ki}^2 }})^2+(\frac{{{\beta _i}{h_{ki}}{a_{ki}}}}{{\beta _i^2h_{ki}^2 }})^2 < {P_1}+{P_2}$ \( \Rightarrow \hspace*{+40} $\({\alpha _k} = \sqrt {{P_k}} \(,\forall k = 1,2$
%\vspace{+4}
%\\\hspace*{+30}and for $\({\mu _i}\( \ge 0$ (that can be acheived by bisection \hspace*{+28} method) \( \Rightarrow $\({\alpha _k} = \frac{{{\beta _i}{h_{ki}}{a_{ki}}}}{{\beta _i^2h_{ki}^2 + {\mu _{i}}}}\(,\forall k = 1,2$

\\
Until $\text{Tr}\left( {\left( {{\mathbf{W}^{\left( {j + 1} \right)}} - {\mathbf{W}^{\left( j \right)}}} \right){{\left( {{\mathbf{W}^{\left( {j + 1} \right)}} - {\mathbf{W}^{\left( j \right)}}} \right)}^*}} \right) \le \delta$ 
    \vspace*{.5em}
\\

      \hline
    \end{tabular}

\end{table}

\section{Robust MSE based Precoding and Projection Filter Design for the Imperfect Channel Knowledge}
The transceiver proposed in the previous section requires perfect CSI. However, in practice, CSI is not perfect due to factors such as channel estimation error or feedback delay.  In this section, we propose a robust precoding and projection filter design for the IFF scheme with imperfect CSI. We can model the CSI error as: ${\mathbf{\hat H}_k} = {\mathbf{H}_k} + {\mathbf{e}_k}, k=1,...,2K$ and ${\mathbf{\hat G}_k} = {\mathbf{G}_k} + {\mathbf{\hat e}_k}, k=1,...,2K$, where ${\mathbf{\hat H}_k}$ and ${\mathbf{\hat G}_k}$ are estimated channel matrices from user $k$ to relay R and vice versa, respectively. In addition, ${\mathbf{e}_k}$ and ${\mathbf{\hat e}_k}$ are the estimation error matrices for the related channels. We assume the components of error matrices ${\mathbf{e}_k}$ and ${\mathbf{\hat e}_k}$ have independent Gaussian distribution with $\mathbb{E}\left\{ {{\mathbf{e}_k}\mathbf{e}_k^*} \right\} = \sigma _h^2\mathbf{I}$ and $\mathbb{E}\left\{ {{{\mathbf{\hat e}}_k}\mathbf{\hat e}_k^*} \right\} = \sigma _g^2\mathbf{I}$, respectively.
 
First, we introduce the modified IFLR. We then derive the optimum precoder and projection matrices in Subsection IV.B and Subsection IV.C.

\subsection{Modified IFLR}
After signal alignment in each pair based on estimated channels as
\begin{eqnarray}
{\mathbf{\hat H}_k}{\mathbf{V}_k} = {\mathbf{\hat H}_{\bar k}}{\mathbf{V}_{\bar k}}, k=1,...,K,
\end{eqnarray}
and therefore
\begin{eqnarray}
{\mathbf{V}_{\bar k}} = \mathbf{\hat H}_{\bar k}^\dag {\mathbf{\hat H}_k}{\mathbf{V}_k}.
\end{eqnarray}
From (2), we can write the received signal as
\begin{eqnarray}
{\mathbf{y}_r} &=& \mathop \sum \limits_{k = 1}^{2K} {\mathbf{\hat H}_k}{\mathbf{V}_k}{\mathbf{\dot s}_k} + \mathop \sum \limits_{k = 1}^K {\mathbf{e}_k}{\mathbf{V}_k}{\mathbf{s}_k} + \mathop \sum \limits_{k = 1}^K {\mathbf{e}_{\bar k}}{\mathbf{V}_{\bar k}}{\mathbf{s}_{\bar k}} + {\mathbf{z}_r}\nonumber\\&=& \mathbf{\hat H}\mathbf{\dot S} + \mathop \sum \limits_{k = 1}^K {\mathbf{e}_k}{\mathbf{V}_k}{\mathbf{s}_k} + \mathop \sum \limits_{ k = 1}^{K} {\mathbf{e}_{\bar k}}{\mathbf{V}_{\bar k}}{\mathbf{s}_{\bar k}} + {\mathbf{z}_r},
\end{eqnarray}
where
\begin{eqnarray}
\mathbf{\hat H} \buildrel \Delta \over = \left[ {{{\mathbf{\hat H}}_1}{\mathbf{V}_1}, \ldots ,{{\mathbf{\hat H}}_K}{\mathbf{V}_K}} \right].
\end{eqnarray}
Similar to the Section II, to recover an equation with ECV $\mathbf{a}_k$, $\mathbf{y}_r$ is projected onto vector $\mathbf{b}_k$, as: 
\begin{eqnarray}
\mathbf{b}_k^*{\mathbf{y}_r} = \mathbf{a}_k^*\mathbf{\dot S} + \left( {\mathbf{b}_k^*\mathbf{\hat H} - \mathbf{a}_k^*} \right)\mathbf{\dot S} + \mathbf{b}_k^*\mathop \sum \limits_{l = 1}^K {\mathbf{e}_l}{\mathbf{V}_l}{\mathbf{s}_l} + \mathbf{b}_k^*\mathop \sum \limits_{l = 1}^K {\mathbf{e}_{\bar l}}{\mathbf{V}_{\bar l}}{\mathbf{s}_{\bar l}} + \mathbf{b}_k^*{\mathbf{z}_r}.
\end{eqnarray}
Hence, the effective noise variance for this recovering is given by
\begin{eqnarray}
{\varepsilon _{e,k}} = \mathbb{E}\left\{ {{{\left| {\left| {\mathbf{b}_k^*{\mathbf{y}_r} - \mathbf{a}_k^*\mathbf{\dot S}} \right|} \right|}^2}} \right\}.
\end{eqnarray}
With some straightforward simplifications, (58) can be rewritten as
\begin{eqnarray}
{\varepsilon _{e,k}} = \mathbb{E}\left\{ {2{{\left| {\left| {{{\mathbf{\hat H}}^*}{\mathbf{b}_k} - {\mathbf{a}_k}} \right|} \right|}^2} + \sigma _h^2\mathop \sum \limits_{l = 1}^K {{\left| {\left| {\mathbf{b}_k^*\mathop \sum \limits_{l = 1}^K {\mathbf{e}_l}{\mathbf{V}_l}{\mathbf{s}_l}} \right|} \right|}^2} + \sigma _h^2\mathop \sum \limits_{l = 1}^K {{\left| {\left| {\mathbf{b}_k^*\mathop \sum \limits_{l = 1}^K {\mathbf{e}_{\bar l}}{\mathbf{V}_{\bar l}}{\mathbf{s}_{\bar l}}} \right|} \right|}^2} + \sigma _r^2{{\left| {\left| {{\mathbf{b}_k}} \right|} \right|}^2}} \right\}.
\end{eqnarray}
\begin{theorem}
By considering the error matrices $\mathbf{e}_{k}, k=1,...,K$ with $\mathbb{E}\left\{ {{\mathbf{e}_k}\mathbf{e}_k^*} \right\} = \sigma _h^2\mathbf{I}$ and $\mathbb{E}\left\{ {{\mathbf{e}_k}\mathbf{e}_{\hat k}^*} \right\} = 0, \forall k = \hat k$, messages $\mathbf{s}_l, l=1,...,K$ with $\mathbb{E}\left\{ {{\mathbf{s}_l}\mathbf{s}_l^*} \right\} = 1$ and $\mathbb{E}\left\{ {{\mathbf{s}_l}\mathbf{s}_{\hat l}^*} \right\} = 0, \forall l \ne \hat l$, matrices $\mathbf{V}_l, l=1,...,K$, and vector $\mathbf{b}_k$, we have
\begin{eqnarray}
\mathbb{E}\left\{{\left| {\left| {\mathbf{b}_k^*\mathop \sum \limits_{l = 1}^K {\mathbf{e}_l}{\mathbf{V}_l}{\mathbf{s}_l}} \right|} \right|^2}\right\} = \sigma _h^2\mathop \sum \limits_{l = 1}^K \text{Tr}\left( {\mathbf{V}_l^*{\mathbf{V}_l}} \right){\left| {\left| {{\mathbf{b}_k}} \right|} \right|^2}.
\end{eqnarray}
\end{theorem}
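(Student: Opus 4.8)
The plan is to evaluate the left-hand side directly, treating the projection vector $\mathbf{b}_k$ and the precoders $\mathbf{V}_l$ as deterministic quantities: they are built from the estimated channels $\hat{\mathbf{H}}_l$ and are hence uncorrelated with the estimation errors $\mathbf{e}_l$, so the only randomness entering the expectation comes from the $\mathbf{e}_l$ and the data vectors $\mathbf{s}_l$. Since each summand $\mathbf{b}_k^*\mathbf{e}_l\mathbf{V}_l\mathbf{s}_l$ is a scalar, I would first expand the squared norm of the sum as a double sum,
\begin{equation}
\mathbb{E}\left\{\left\|\mathbf{b}_k^*\sum_{l=1}^K\mathbf{e}_l\mathbf{V}_l\mathbf{s}_l\right\|^2\right\}=\sum_{l=1}^K\sum_{m=1}^K\mathbb{E}\left\{\mathbf{b}_k^*\mathbf{e}_l\mathbf{V}_l\mathbf{s}_l\mathbf{s}_m^*\mathbf{V}_m^*\mathbf{e}_m^*\mathbf{b}_k\right\}.
\end{equation}

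For $l\neq m$, the data vectors are uncorrelated across users and independent of the errors, so conditioning on $\{\mathbf{e}_l\}$ and using $\mathbb{E}\{\mathbf{s}_l\mathbf{s}_m^*\}=\mathbf{0}$ kills every off-diagonal term (equivalently one could invoke that the $\mathbf{e}_l$ are mutually independent and zero mean). Thus only the $K$ diagonal terms $l=m$ survive. For a fixed $l$, taking the expectation over $\mathbf{s}_l$ first and using $\mathbb{E}\{\mathbf{s}_l\mathbf{s}_l^*\}=\mathbf{I}$ reduces that term to $\mathbf{b}_k^*\,\mathbb{E}\{\mathbf{e}_l\mathbf{V}_l\mathbf{V}_l^*\mathbf{e}_l^*\}\,\mathbf{b}_k$.

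The one substantive ingredient is the matrix identity $\mathbb{E}\{\mathbf{e}_l\mathbf{M}\mathbf{e}_l^*\}=\sigma_h^2\,\text{Tr}(\mathbf{M})\,\mathbf{I}$, valid for any deterministic $\mathbf{M}$ when the entries of $\mathbf{e}_l$ are i.i.d.\ with zero mean and variance $\sigma_h^2$; I would prove it entrywise via $[\mathbb{E}\{\mathbf{e}_l\mathbf{M}\mathbf{e}_l^*\}]_{ij}=\sum_{p,q}M_{pq}\,\mathbb{E}\{[\mathbf{e}_l]_{ip}\overline{[\mathbf{e}_l]_{jq}}\}=\sigma_h^2\,\delta_{ij}\,\text{Tr}(\mathbf{M})$. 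Applying this with $\mathbf{M}=\mathbf{V}_l\mathbf{V}_l^*$ and using $\text{Tr}(\mathbf{V}_l\mathbf{V}_l^*)=\text{Tr}(\mathbf{V}_l^*\mathbf{V}_l)$ makes the $l$-th diagonal term equal to $\sigma_h^2\,\text{Tr}(\mathbf{V}_l^*\mathbf{V}_l)\,\mathbf{b}_k^*\mathbf{b}_k=\sigma_h^2\,\text{Tr}(\mathbf{V}_l^*\mathbf{V}_l)\,\|\mathbf{b}_k\|^2$, and summing over $l=1,\ldots,K$ gives the claim. There is no real obstacle here: everything is bilinear bookkeeping plus this elementary second-moment identity, and the only point needing a word of care is the legitimacy of pulling $\mathbf{b}_k$ and $\mathbf{V}_l$ out of the expectation, which is precisely why they are functions of the estimated rather than the true channel.
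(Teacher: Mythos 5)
Your proof is correct and follows essentially the same route as the paper's Appendix I: expand the squared norm into a double sum, eliminate the cross terms via the data covariance assumptions, and reduce each surviving diagonal term with the identity $\mathbb{E}\left\{ {\mathbf{e}_l}{\mathbf{V}_l}\mathbf{V}_l^*\mathbf{e}_l^* \right\} = \sigma_h^2\,\text{Tr}\left( \mathbf{V}_l^*{\mathbf{V}_l} \right)\mathbf{I}$. The only difference is cosmetic: you justify that second-moment identity entrywise, whereas the paper invokes a quadratic-form formula from the Matrix Cookbook; your entrywise derivation is the more directly applicable justification, but the argument is the same.
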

\begin{proof}
The proof is given in Appendix I.
\end{proof}
According to Theorem 1, the expression in (59) becomes
\begin{eqnarray}
{\varepsilon _{e,k}} = 2{\left| {\left| {{{\mathbf{\hat H}}^*}{\mathbf{b}_k} - {\mathbf{a}_k}} \right|} \right|^2} + \sigma _h^2\mathop \sum \limits_{l = 1}^K \left( {\text{Tr}\left( {\mathbf{V}_l^*{\mathbf{V}_l}} \right) + \text{Tr}\left( {\mathbf{V}_{\bar l}^*{\mathbf{V}_{\bar l}}} \right)} \right){\left| {\left| {{\mathbf{b}_k}} \right|} \right|^2} + \sigma _r^2{\left| {\left| {{\mathbf{b}_k}} \right|} \right|^2}.
\end{eqnarray}
Accordingly, the computation rate for the equation with ECV $\mathbf{a}_k$ is given by
\begin{eqnarray}
{R_k} = {\rm{lo}}{{\rm{g}}^ + }\left( {\frac{1}{{2{{\left| {\left| {{{\mathbf{\hat H}}^*}{\mathbf{b}_k} - {\mathbf{a}_k}} \right|} \right|}^2} + \sigma _h^2\mathop \sum \nolimits_{l = 1}^K \left( {\text{Tr}\left( {\mathbf{V}_l^*{\mathbf{V}_l}} \right) + \text{Tr}\left( {\mathbf{V}_{\bar l}^*{\mathbf{V}_{\bar l}}} \right)} \right){{\left| {\left| {{\mathbf{b}_k}} \right|} \right|}^2} + \sigma _r^2{{\left| {\left| {{\mathbf{b}_k}} \right|} \right|}^2}}}} \right).
\end{eqnarray}
Note that an equation with message transmission power $P$ and effective recovery noise variance $N$ has computation rate ${\rm{lo}}{{\rm{g}}^ + }\left( {\frac{P}{{{N}}}} \right)$ [6].
\begin{theorem}
The optimum projection vector $\mathbf{b}_k$ for recovering the equation with ECV $\mathbf{a}_k$ is     
\begin{eqnarray}
\mathbf{b}_k^* = \mathbf{a}_k^*{\mathbf{\hat H}^*}{\left( {\frac{{\sigma _r^2}}{2}\mathbf{I} + \frac{{\sigma _h^2}}{2}\left( {\mathop \sum \limits_{l = 1}^K \text{Tr}\left( {\mathbf{V}_l^*{\mathbf{V}_l}} \right) + \text{Tr}\left( {\mathbf{V}_{\bar l}^*{\mathbf{V}_{\bar l}}} \right)} \right)\mathbf{I} + \mathbf{\hat H}{{\mathbf{\hat H}}^*}} \right)^{ - 1}}\mathbf{\hat H},
\end{eqnarray}
and hence, the projection matrix $\mathbf{B}$ becomes
\begin{eqnarray}
\mathbf{B} = \mathbf{A}{\mathbf{\hat H}^*}{\left( {\frac{{\sigma _r^2}}{2}\mathbf{I} + \frac{{\sigma _h^2}}{2}\left( {\mathop \sum \limits_{l = 1}^K \text{Tr}\left( {\mathbf{V}_l^*{\mathbf{V}_l}} \right) + \text{Tr}\left( {\mathbf{V}_{\bar l}^*{\mathbf{V}_{\bar l}}} \right)} \right)\mathbf{I} + \mathbf{\hat H}{{\mathbf{\hat H}}^*}} \right)^{ - 1}}\mathbf{\hat H}.
\end{eqnarray}
\end{theorem}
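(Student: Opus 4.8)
The plan is to recognize the statement as a minimum-mean-square-error computation in the single variable $\mathbf{b}_k$: with the integer coefficient vector $\mathbf{a}_k$ and all precoders $\mathbf{V}_l,\mathbf{V}_{\bar l}$ held fixed, choose $\mathbf{b}_k$ to minimize the effective noise variance $\varepsilon_{e,k}$, which by (59) together with the preceding theorem already has the closed form (61). A first observation is that, once $\mathbf{A}$ and the precoders are fixed, (61) involves only $\mathbf{b}_k$ and no other projection vector, so the $L$ vectors $\mathbf{b}_1,\dots,\mathbf{b}_L$ may be optimized independently, and it suffices to minimize
\begin{eqnarray}
f(\mathbf{b}_k)\;=\;2\big\|\mathbf{\hat H}^*\mathbf{b}_k-\mathbf{a}_k\big\|^2+c\,\|\mathbf{b}_k\|^2,\qquad
c\;\triangleq\;\sigma_r^2+\sigma_h^2\sum_{l=1}^{K}\big(\Tr(\mathbf{V}_l^*\mathbf{V}_l)+\Tr(\mathbf{V}_{\bar l}^*\mathbf{V}_{\bar l})\big),\nonumber
\end{eqnarray}
over $\mathbf{b}_k\in\C^{N_r}$. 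Expanding the squared norm shows $f(\mathbf{b}_k)=\mathbf{b}_k^*\big(2\mathbf{\hat H}\mathbf{\hat H}^*+c\,\mathbf{I}\big)\mathbf{b}_k-2\,\Re\{\mathbf{b}_k^*\mathbf{\hat H}\mathbf{a}_k\}+2\|\mathbf{a}_k\|^2$, a strictly convex quadratic since its Hessian $2\mathbf{\hat H}\mathbf{\hat H}^*+c\,\mathbf{I}$ is positive definite ($c>0$ because $\sigma_r^2>0$). Hence $f$ has a unique minimizer, determined by its stationarity condition.

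The second step is to write and solve that condition. Differentiating $f$ with respect to $\mathbf{b}_k^*$ (Wirtinger calculus, treating $\mathbf{b}_k$ and $\mathbf{b}_k^*$ as independent, which is legitimate because $f$ is real-valued) gives
\begin{eqnarray}
\big(2\mathbf{\hat H}\mathbf{\hat H}^*+c\,\mathbf{I}\big)\,\mathbf{b}_k \;=\; 2\,\mathbf{\hat H}\mathbf{a}_k ,\nonumber
\end{eqnarray}
and since the coefficient matrix is invertible by the positive-definiteness just noted, $\mathbf{b}_k=\big(\mathbf{\hat H}\mathbf{\hat H}^*+\tfrac{c}{2}\mathbf{I}\big)^{-1}\mathbf{\hat H}\mathbf{a}_k$. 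Substituting $\tfrac{c}{2}=\tfrac{\sigma_r^2}{2}+\tfrac{\sigma_h^2}{2}\sum_{l=1}^{K}(\Tr(\mathbf{V}_l^*\mathbf{V}_l)+\Tr(\mathbf{V}_{\bar l}^*\mathbf{V}_{\bar l}))$ and taking conjugate transposes then yields the stated formula (63) for $\mathbf{b}_k^*$, in complete analogy with the perfect-CSI expression (10), the only new ingredient being the extra additive term $\tfrac{\sigma_h^2}{2}(\cdots)\mathbf{I}$ produced by the channel-estimation error. Finally, because the rows of $\mathbf{A}$ are the $\mathbf{a}_k^*$, the rows of $\mathbf{B}$ are the $\mathbf{b}_k^*$, and the matrix to be inverted in (63) does not depend on $k$, stacking the $L$ identities (63) for $k=1,\dots,L$ produces the matrix form (64) immediately.

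I do not expect a genuine obstacle; the argument is a textbook LMMSE/least-squares derivation. The two facts that matter most are already in hand: (a) the reduction of the raw second moment (59) to the closed form (61), which we simply import from the preceding theorem together with the fact that $\mathbb{E}\{\mathbf{\dot s}_k\mathbf{\dot s}_k^*\}=2\mathbf{I}$ — a consequence of the pair-wise signal alignment and the unit-power, mutually uncorrelated message model, which also accounts for the factor $2$ in front of $\|\mathbf{\hat H}^*\mathbf{b}_k-\mathbf{a}_k\|^2$ — and that the estimation errors, the data, and the relay noise $\mathbf{z}_r$ are pairwise uncorrelated, so that (59) truly is the total variance; and (b) the positive-definiteness of $\tfrac{\sigma_r^2}{2}\mathbf{I}+\tfrac{\sigma_h^2}{2}(\cdots)\mathbf{I}+\mathbf{\hat H}\mathbf{\hat H}^*$, which secures both strict convexity of $f$ and the invertibility used in (63). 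The only place to proceed with a little care is the conjugation/transposition bookkeeping in passing from the vector normal equation to the row-vector forms (63)--(64), and noting that the per-equation objectives $\varepsilon_{e,k}$ genuinely decouple so that componentwise minimization is optimal.
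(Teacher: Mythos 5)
Your proposal is correct and follows essentially the same route as the paper's Appendix II: expand the quadratic form in $\mathbf{b}_k$ from (61), set the (Wirtinger) gradient to zero, and invert the positive-definite coefficient matrix, then stack rows to obtain $\mathbf{B}$. The only blemish is a bookkeeping slip in your intermediate expansion (the cross term of $2\|\mathbf{\hat H}^*\mathbf{b}_k-\mathbf{a}_k\|^2$ should be $-4\Re\{\mathbf{b}_k^*\mathbf{\hat H}\mathbf{a}_k\}$, not $-2\Re\{\cdot\}$), but your normal equation and final formula are consistent with the correct expansion, so the argument stands.
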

\begin{proof}
The proof is given in Appendix II.
\end{proof}
By substituting (63) into (61) and some straightforward simplifications, the effective noise variance $\varepsilon _{e,k}$ is obtained as  
\begin{eqnarray}
\varepsilon _{e,k} = \mathbf{a}_k^*\mathbf{U}\mathbf{a}_k,
\end{eqnarray}
where
\begin{eqnarray}
\mathbf{U} = \mathbf{I} - {\mathbf{\hat H}^*}{\left( {\frac{{\sigma _r^2}}{2}\mathbf{I} + \frac{{\sigma _h^2}}{2}\left( {\mathop \sum \limits_{l = 1}^K \text{Tr}\left( {\mathbf{V}_l^*{\mathbf{V}_l}} \right) + \text{Tr}\left( {\mathbf{V}_{\bar l}^*{\mathbf{V}_{\bar l}}} \right)} \right)\mathbf{I} + \mathbf{\hat H}{{\mathbf{\hat H}}^*}} \right)^{ - 1}}\mathbf{\hat H}.
\end{eqnarray}       
The other concepts by replacing (10) with (63) are similar to Section II.

\subsection{Robust Multiple-Access Phase based MSE Precoding and Projection Filter Design}
Here, we consider Sum-Equation MSE and Max-Equation MSE critera for transceiver design with imperfect CSI. The problems (22) and (32) get solved by considering the new $\mathbf{U}$ in (66). 
\subsubsection{Robust Sum-Equation MSE based Precoding and Projection Filter Design}
From (57) and (58), the Sum-Equation MSE minimization problem considering the estimated channel matrix ${\mathbf{\hat H}_k}$ can be written as
\begin{eqnarray}
\mathop {\min }\limits_{{\mathbf{V}_k},\mathbf{A},\mathbf{B}} \varepsilon_e  =\mathop \sum \limits_{i = 1}^L\varepsilon_{e,k} = \mathop \sum \limits_{i = 1}^L \left. \mathbb{E}\left\{ {{\left| {\left| {\mathbf{b}_i^*{\mathbf{y}_r} - \mathbf{a}_i^*\mathbf{\dot S}} \right|} \right|}^2} \right|{\mathbf{\hat H}_i}\right\},\nonumber
\end{eqnarray}
subject to  
\begin{eqnarray}
\text{Tr}\left( {{\mathbf{V}_k}\mathbf{V}_k^*} \right) + \text{Tr}\left( {\mathbf{\hat H}_{\bar k}^\dag {{\mathbf{\hat H}}_k}{\mathbf{V}_k}\mathbf{V}_k^*\mathbf{\hat H}_k^*\mathbf{\hat H}{{_{\bar k}^\dag }^*}} \right) \le {P_k}.
\end{eqnarray}
The objective function in (67) can be simplified to
\begin{eqnarray}
\varepsilon_e &=& \mathop \sum \limits_{k = 1}^L \mathop \sum \limits_{i=1}^K 2(\mathbf{b}_k^*{\mathbf{\hat H}_i}{\mathbf{V}_i}\mathbf{V}_i^*\mathbf{\hat H}_i^*{\mathbf{b}_k} - 2\mathbf{a}_{k,i}^*\mathbf{V}_i^*\mathbf{\hat H}_i^*{\mathbf{b}_k} + \mathbf{a}_{k,i}^*{\mathbf{a}_{k,i}}) \nonumber\\ &+& \sigma _h^2\mathop \sum \limits_{l = 1}^K \left( {\text{Tr}\left( {\mathbf{V}_l^*{\mathbf{V}_l}} \right) + \text{Tr}\left( {\mathbf{V}_{\bar l}^*{\mathbf{V}_{\bar l}}} \right)} \right)\mathbf{b}_k^*{\mathbf{b}_k} + \sigma _r^2\mathbf{b}_k^*{\mathbf{b}_k}.
\end{eqnarray}
Similar to the procedure of Subsection III.A.2, using KKT conditions, we have
\begin{eqnarray}
2\mathop \sum \limits_{i = 1}^L (\mathbf{\hat H}_k^*{\mathbf{b}_i}\mathbf{b}_i^*{\mathbf{\hat H}_k}{\mathbf{V}_k} - \mathbf{\hat H}_k^*{\mathbf{b}_i}\mathbf{a}_{k,i}^* &+& \sigma _h^2\left( {{\mathbf{V}_k} + \mathbf{\hat H}_k^*\mathbf{\hat H}{{_{\bar k}^\dag }^*}\mathbf{\hat H}_{\bar k}^\dag {\mathbf{\hat H}_k}{\mathbf{V}_k}} \right)\mathbf{b}_i^*{\mathbf{b}_i}) \nonumber\\ &+& {\mu _k}\left( {{\mathbf{V}_k} + \mathbf{\hat H}_k^*\mathbf{\hat H}{{_{\bar k}^\dag }^*}\mathbf{\hat H}_{\bar k}^\dag {\mathbf{\hat H}_k}{\mathbf{V}_k}} \right)=0.
\end{eqnarray}
Thus, we have
\begin{eqnarray}
{\mathbf{V}_k} = {(\mathbf{\hat H}_k^*\mathop \sum \limits_{i = 1}^L {\mathbf{b}_i}\mathbf{b}_i^*{\mathbf{\hat H}_k} + \left( {\frac{1}{2}{\mu _k} + \frac{{\sigma _h^2}}{2}\mathop \sum \limits_{i = 1}^L \mathbf{b}_i^*{\mathbf{b}_i}} \right)\left( {\mathbf{I} + \mathbf{\hat H}_k^*\mathbf{\hat H}{{_{\bar k}^\dag }^*}\mathbf{\hat H}_{\bar k}^\dag {{\mathbf{\hat H}}_k}} \right))^{ - 1}}\mathop \sum \limits_{i = 1}^L \mathbf{\hat H}_k^*{\mathbf{b}_i}\mathbf{a}_{k,i}^*,
\end{eqnarray}
\begin{eqnarray}
{\mu _k}\left( {\text{Tr}\left( {{\mathbf{V}_k}\mathbf{V}_k^*} \right) + \text{Tr}\left( {\mathbf{\hat H}_{\bar k}^\dag {{\mathbf{\hat H}}_k}{\mathbf{V}_k}\mathbf{V}_k^*\mathbf{\hat H}_k^*\mathbf{\hat H}{{_{\bar k}^\dag }^*}} \right) - {P_k}} \right) = 0, \forall k=1,...,K.
\end{eqnarray}  
The parameter $\mu _k$ can be obtained as proposed in Subsection III.A.2. Algorithm 2 can be used by replacing (36) with (70).
 
\subsubsection{Robust Max-Equation MSE based Precoding and Projection Filter Design}
The ${\varepsilon _{e,k}}$ can be written as
\begin{eqnarray}
{\varepsilon _{e,k}}&=&\mathop \sum \limits_{i = 1}^K 2{ || {\mathbf{V}_i^*\mathbf{\hat H}_i^*{\mathbf{b}_k} - {\mathbf{a}_{k,i}}} ||^2} +\sigma _h^2\left(\text{Tr}\left( {\mathop \sum \limits_{l = 1}^K \mathbf{V}_l^*{\mathbf{V}_l}} \right) + \text{Tr}\left( {\mathop \sum \limits_{l = 1}^K \mathbf{V}_l^*\mathbf{\hat H}_l^*\mathbf{\hat H}{{_{\bar l}^\dag }^*}\mathbf{\hat H}_{\bar l}^\dag {{\mathbf{\hat H}}_l}{\mathbf{V}_l}} \right)\right){\left| {\left| {{\mathbf{b}_k}} \right|} \right|^2}\nonumber\\&+& \sigma _r^2{\left| {\left| {{\mathbf{b}_k}} \right|} \right|^2}=\left| {\left| {{\mathbb{V}^*}{\mathbb{\hat H}^*}{\mathbf{b}_k} - {\mathbf{a}_k}} \right|} \right|^2 +\sigma _h^2{\left| {\left| {{\mathbf{b}_k}} \right|} \right|^2}\left({\rm{\text{Tr}}}\left( {{{\rm{\mathbb{V}}}^{\rm{*}}}{\rm{\mathbb{V}}}} \right) + {\rm{\text{Tr}}}\left( {{{\rm{\mathbb{V}}}^{\rm{*}}}{\rm{\mathbf{\Phi}^* \mathbf{\Phi}\mathbb{V}}}} \right)\right)+ \sigma _r^2{\left| {\left| {{\mathbf{b}_k}} \right|} \right|^2}\nonumber\\&=&\left|\left|\begin{array}{*{20}{c}}{{\sigma _r}{||\mathbf{b}_k||}}\\{\left( {\mathbf{b}_k^*\mathbb{\hat H} \otimes \mathbf{I}} \right)\text{vec}\left( {{\mathbb{V}^*}} \right) - {\mathbf{a}_k}}\\{{\sigma _h}{||\mathbf{b}_k||}\left( {\text{vec}\left( {{\mathbb{V}^*}} \right) + \text{vec}\left( {{\mathbb{V}^*}{{\mathbf{\Phi}}^*} } \right)} \right)}\end{array}\right|\right|^2.
\end{eqnarray}       
where
\begin{eqnarray}
\mathbb{\hat H} \buildrel \Delta \over = \left[ {\begin{array}{*{20}{c}}{{\mathbf{\hat H}_1}}&{\begin{array}{*{20}{c}} \ldots &{{\mathbf{\hat H}_K}}\end{array}}\end{array}} \right],
\end{eqnarray}
\begin{eqnarray}
\mathbf{\Phi}\buildrel \Delta \over =\left( {\begin{array}{*{20}{c}}{\mathbf{\hat H}_{\bar 1}^\dag {{\mathbf{\hat H}}_1}}&{\begin{array}{*{20}{c}}0& \cdots \end{array}}&0\\{\begin{array}{*{20}{c}}0\\ \vdots \end{array}}& \ddots &{\begin{array}{*{20}{c}} \vdots \\0\end{array}}\\0&{\begin{array}{*{20}{c}} \cdots &0\end{array}}&{\mathbf{\hat H}_{\bar K}^\dag {{\mathbf{\hat H}}_K}}\end{array}} \right).
\end{eqnarray}
Similar to Subsection III.A.1, the optimization problem of transmit precoding matrices $\mathbf{V}_k, \forall k$ can be written as
\begin{eqnarray}
\mathop {\min }\limits_{\{ {\mathbf{V}_k};x\} } x, \nonumber
\end{eqnarray}
subject to
\begin{eqnarray}
\left\{ {\begin{array}{*{20}{c}}{\left|\left|\begin{array}{*{20}{c}}{{\sigma _r}{||\mathbf{b}_k||}}\\{\left( {\mathbf{b}_k^*\mathbb{\hat H} \otimes \mathbf{I}} \right)\text{vec}\left( {{\mathbb{V}^*}} \right) - {\mathbf{a}_k}}\\{{\sigma _h}{||\mathbf{b}_k||}\left( {\text{vec}\left( {{\mathbb{V}^*}} \right) + \text{vec}\left( {{\mathbb{V}^*}{{\mathbf{\Phi}}^*} } \right)} \right)}\end{array}\right|\right|^2 \le x}\\{{\left| {\left| {{\rm{\text{vec}}}\left( {{\mathbf{V}_k}} \right)} \right|} \right|^2} + {\left| {\left| {\text{vec}\left( {{\rm{ \mathbf{H}_{\bar k}^\dag {\mathbf{H}_k}{\mathbf{V}_k}}}} \right)} \right|} \right|^2} \le {P_k}, k=1,..., K.}\end{array}} \right.
\end{eqnarray}
Similarly, the above optimization problem is a SOCP problem, and algorithm 1 can be used by replacing (28) with (75).

\subsection{Robust Broadcast Phase based MSE Precoding and Projection Filter Design}
Here, in the second phase, we consider Sum MSE and Max MSE with imperfect CSI.
\subsubsection{Robust Sum MSE based Precoding and Projection Filter Design}
The minimization problem defined in (38) and (39), considering the estimated channel matrix ${\mathbf{\hat G}_k}$ can be modified to
\begin{eqnarray}
\mathop {\min }\limits_{\mathbf{W},{\mathbf{D}_k}} {\tilde \varepsilon_e}  = \mathop \sum \limits_{k = 1}^{2K} {\tilde \varepsilon _{e,k}}= \mathop \sum \limits_{k = 1}^{2K} \left. \mathbb{E}\left\{ {{\left| {\left| {{\mathbf{D}_k}{\mathbf{y}_k} - \mathbf{t}} \right|} \right|^2}}\right|{\mathbf{\hat G}_k}\right\},\nonumber
\end{eqnarray}
subject to
\begin{eqnarray}
\text{Tr}\left( {\mathbf{W}{\mathbf{W}^*}} \right) \le {P_r},
\end{eqnarray}
where
\begin{eqnarray}
\tilde \varepsilon_e  = \mathop \sum \limits_{k = 1}^{2K} \text{Tr}\left\{ {{\mathbf{D}_k}{{\mathbf{\hat G}}_k}\mathbf{W}{\mathbf{W}^*}\mathbf{\hat G}_k^*\mathbf{D}_k^* - 2{\mathbf{W}^*}\mathbf{\hat G}_k^*\mathbf{D}_k^* + \sigma _g^2\text{Tr}\left( {{\mathbf{W}^*}\mathbf{W}} \right){\mathbf{D}_k}\mathbf{D}_k^* + \mathbf{I} + \sigma _u^2{\mathbf{D}_k}\mathbf{D}_k^*} \right\}.
\end{eqnarray}
To solve the problem, with KKT conditions similar to the solution of the problem presented in Subsection III.B.1, we have
\begin{eqnarray}
{\mathbf{D}_k} = {\mathbf{W}^*}\mathbf{\hat G}_k^*{\left( {{{\mathbf{\hat G}}_k}\mathbf{W}{\mathbf{W}^*}\mathbf{\hat G}_k^* + \sigma _g^2\text{Tr}\left( {{\mathbf{W}^*}\mathbf{W}} \right)\mathbf{I} + \sigma _u^2\mathbf{I}} \right)^{ - 1}}.
\end{eqnarray}
Moreover, to find $\mathbf{W}$, according to the KKT condition in (45), we can write
\begin{eqnarray}
\mathop \sum \limits_{k = 1}^{2K} (\mathbf{\hat G}_k^*\mathbf{D}_k^*{\mathbf{D}_k}{\mathbf{\hat G}_k}\mathbf{W} + \sigma _g^2\mathbf{W}\text{Tr}({\mathbf{D}_k{\mathbf{D}_k^*}}) - \mathbf{\hat G}_k^*\mathbf{D}_k^*) + \rho \mathbf{W}=0.
\end{eqnarray}
Hence, we have 
\begin{eqnarray}
\mathbf{W} = {\left(\mathop \sum \limits_{k = 1}^{2K} \mathbf{\hat G}_k^*\mathbf{D}_k^*{\mathbf{D}_k}{\mathbf{\hat G}_k} + \sigma _g^2\mathop \sum \limits_{k = 1}^{2K}\text{Tr}({\mathbf{D}_k{\mathbf{D}_k^*}})+ \rho \mathbf{I}\right)^{ - 1}}\mathop \sum \limits_{k = 1}^{2K} \mathbf{\hat G}_k^*\mathbf{D}_k^*,
\end{eqnarray}
\begin{eqnarray}
\rho \left( {\text{Tr}\left( {\mathbf{W}{\mathbf{W}^*}} \right) - {P_r}} \right) = 0.
\end{eqnarray}
The parameter $\rho$ can be obtained similar to what explained in Subsection III.B.1. Algorithm 3 can be used by replacing (43) and (48) with (78) and (80), respectively.
\subsubsection{Robust Max MSE based Precoding and Projection Filter Design}
We consider the following optimization problem:
\begin{eqnarray}
\mathop {\min }\limits_{{\mathbf{D}_k}, \mathbf{W}} \mathop {\max }\limits_{k = 1, \ldots ,2K} {\tilde \varepsilon _{e,k}},\nonumber
\end{eqnarray}
subject to
\begin{eqnarray}
\text{Tr}\left( {\mathbf{W}{\mathbf{W}^*}} \right) \le {P_r},
\end{eqnarray}
where from (77), the ${\tilde \varepsilon _{e,k}}$ is given by
\begin{eqnarray}
\tilde \varepsilon _{e,k}={\left| {\left| {\text{vec}\left( {{\mathbf{D}_k}{\mathbf{\hat G}_k}\mathbf{W}} \right) - \text{vec}\left( \mathbf{I} \right)} \right|} \right|^2} + \sigma _g^2{\left| {\left| {\text{vec}\left( \mathbf{W} \right)} \right|} \right|^2}{\left| {\left| {\text{vec}\left( {{\mathbf{D}_k}} \right)} \right|} \right|^2}+\sigma _k^2{\left| {\left| {\text{vec}\left( {{\mathbf{D}_k}} \right)} \right|} \right|^2}.
\end{eqnarray}
This problem can be solved by the alternative optimization method. In the first step, for $\mathbf{W}$, we consider
\begin{eqnarray}
\mathop {\min }\limits_{\{ {\mathbf{W}};x\} } x, \nonumber
\end{eqnarray}
subject to
\begin{eqnarray}
\left\{ {\begin{array}{*{20}{c}}{\left|\left|\begin{array}{*{20}{c}}{{\sigma _k}{||\text{vec}(\mathbf{D}}_k)||}\\{\left( {\mathbf{I} \otimes {\mathbf{D}_k}{\mathbf{\hat G}_k}} \right){\rm{\text{vec}}}\left( {\rm{\mathbf{W}}} \right) - {\rm{\text{vec}}}\left( {\rm{\mathbf{I}}} \right)}\\{\sigma_g ||\text{vec}(\mathbf{D}_k)||\text{vec}(\mathbf{W})}\end{array}\right|\right|^2 \le x}\\{{\left| {\left| {\text{vec}\left( \mathbf{W} \right)} \right|} \right|^2} \le {P_r}, k=1,..., 2K.}\end{array}} \right. 
\end{eqnarray}
In the second step, for $\mathbf{D}_k, \forall k$, we consider
\begin{eqnarray}
\mathop {\min }\limits_{\{ {\mathbf{D}_k};x\} } x, \nonumber
\end{eqnarray}
subject to
\begin{eqnarray}
\left\{ {\begin{array}{*{20}{c}}{\left|\left|\begin{array}{*{20}{c}}{{\sigma _k}||\text{vec}(\mathbf{D}_k)||}\\{\left( {{\mathbf{W}^*}{\mathbf{\hat G}_k^*}} \otimes \mathbf{I} \right){\rm{\text{vec}}}\left( {\rm{\mathbf{D}_k}} \right) - {\rm{\text{vec}}}\left( {\rm{\mathbf{I}}} \right)}\\{{\sigma_g ||\text{vec}(\mathbf{W})||\text{vec}(\mathbf{D}_k)}}\end{array}\right|\right|^2 \le x}\end{array}} \right.
\end{eqnarray}
Similarily, the above optimization problems are SOCP. Algorithm 4 can be used by replacing (51) and (52) with (84) and (85), respectively.

\section{Simulation Results}
In this section, we evaluate the performance of our proposed schemes and compare the results with the existing work in the literature. For simulation evaluation, we consider a two-pair two-way system, i.e., $K=2$. The Rayleigh channel parameters are equal to $\sigma _k^2 = 1, k = 1, \ldots ,4$. The channel noises are assumed to have a unit variance, i.e. $\sigma _r^2 = \sigma _u^2 = 1$. The parameter $\delta$ in the algorithms is set to ${10^{ - 3}}$, and the target rate $R_t = 1 $bit/channel use is considered. 

Fig. 2 shows the MSE distribution among equations and the total MSE for the proposed Sum-Equation MSE Minimization scheme and Max-Equation MSE scheme, for the case that each node has two antennas, i.e. $N_r=N_k=2, k=1,...,4$, considering perfect CSI. In this Fig., for simplicity, we suppose that each user sends only one message. Hence, the relay has to recover two independent equations according to the proposed algorithms. We can see that the proposed Sum-Equation MSE minimization scheme achieves the minimum total MSE, i.e. the sum of the MSE of the equations, while the proposed Max-Equation MSE scheme has less MSE for the worst equation, which has lower rate. Fig. 3 shows the average number of cases that each user utilizes only one of the two transmitted equations of the relay. As observed, this average is decreased by the increase of the SNR, which indicates that at high SNR using all of the transmitted equations can be more beneficial to each user. Hence, since the users recover their messages by using all of the transmitted equations with a probability higher than 0.6, we expect the Max-Equation MSE, which guarantees the MSE of the worst equation among all of the equations, to have a better performance than the Sum-Equation MSE.

Fig. 4 compares the outage probability of our proposed scheme in the case of perfect CSI with the ones introduced in [10] that uses AF relaying and in [12] that uses DF relaying, i.e. Denoise-and-Forward, for $N_r=N_k=2, k=1,...,4$. As it is observed, the proposed scheme has better performance in all SNRs, and provides at least 1 dB SNR improvement in comparison with the best conventional relaying scheme. In addition, the Max based MSE precoding and filter design, using Max-Equation MSE and Max MSE, performs better compared to the Sum based MSE precoding and filter design, using Sum-Equation MSE and Sum MSE. This result justifies what we expected form Fig. 3. Note, as has been discussed before, the Max based MSE has more complexity than the Sum based MSE due to the ECV search problem. 

In Fig. 5, the average sum rate of the proposed scheme is compared with the conventional precoding and filter designs considering the availability of perfect CSI for $N_r=N_k=2, k=1,...,4$. It can be observed that our proposed scheme performs significantly better than the conventional strategies in all SNRs. For example, in sum rate of 7 bit/channel use, the proposed scheme has 1.5 dB improvement in comparison with the best conventional relaying scheme. Moreover, the Max based MSE design outperforms the Sum based MSE transceiver. The results of Fig. 4 and 5 demonstrate that the use of the interference in terms of equations has significant superiority than when the interference is considered as an additional noise, like in the conventional AF and DF schemes.

In Fig 6, the effect of the number of antennas $N$, i.e. $N_r=N_k=N, k=1,...,4$, on the performance of the system is assessed. As can be observed and expected, the sum rate of the proposed scheme increases by higher $N$. For example, in sum rate of 5 bit/channel use, the system with $N=2$ performs 4.5 dB better than the one with $N=1$. 

In Fig. 7, we investigate the effect of channel estimation errors on the performance of the system with $N_r=N_k=2, k=1,...,4$, where the error power is $\sigma _h^2=\sigma _g^2$. The plots are provided for two precoder and filter designs, the non-robust design neglecting the presence of CSI error, and the robust design. As expected, the robust design has a better performance than the non-robust design, and the improvement becomes more by increasing the error power. For instance, when error power is 0.1, the robust design performs 2 dB better in sum rate of 5 bit/channel use, and at error power 0.4, about 4 dB better in sum rate of 4 bit/channel use. Also, as can be observed, as the error power goes up, the performance is degraded even in the robust design case. For example in sum rate of 6 bit/channel use, the design with perfect CSI has 2.5 dB better performance in comparison with the robust design when there is an imperfect CSI with error power 0.1, and the robust design with error power 0.1 performs significantly better than the one with error power 0.4. In addition, the Max based MSE design performs better in different error powers.     

\section{Conclusion}
In this paper, we have proposed Integer Forcing-and-Forward scheme for the MIMO multi-pair two-way relaying system based on the integer forcing linear receiver structure. We designed the precoder and projection matrices using the proposed Equation based MSE critera, i.e. Sum-Equation MSE and Max-Equation MSE in the multiple-access phase, and conventional user based MSE critera, i.e. Sum MSE and Max MSE in the broadcast phase. We also derived the precoder and filters design at the presence of CSI error. We have introduced modified integer forcing linear receiver to overcome the channel estimation error efficiently. For the schemes, we have proposed algorithms in which the alternative method is applied, and thus, the optimum solution can be achieved. The proposed scheme shows a significantly better performance, in terms of the sum rate and the outage probability, in comparison with conventional designs. Moreover, in the case of imperfect CSI, the proposed robust transceiver design improves the system performance compared with the non-robust design, in which the effect of channel estimation error is neglected.
\begin{figure}[p]
\centering
\center
\includegraphics[width =6in]{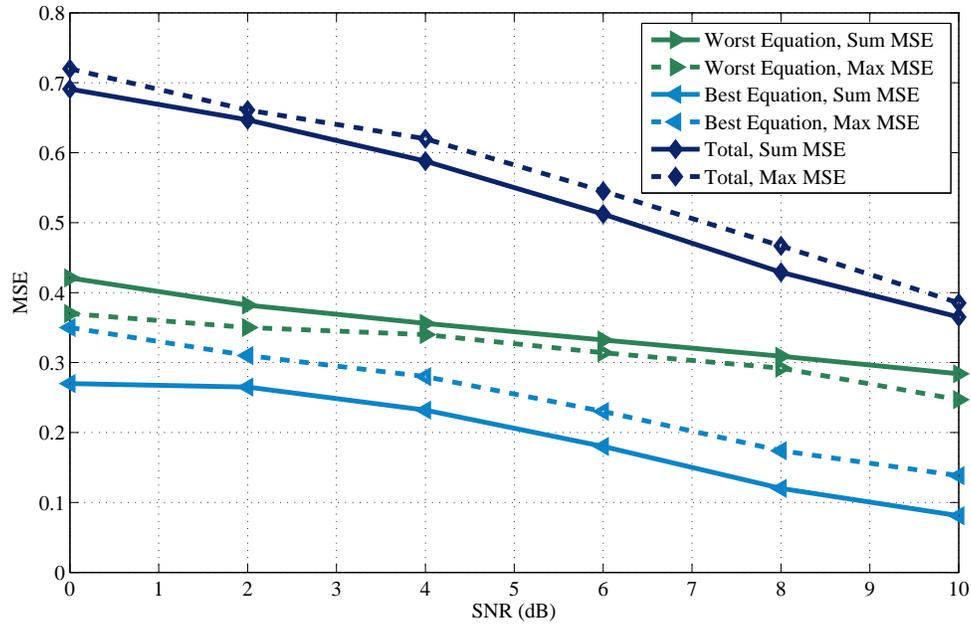}
\caption{The MSE of the proposed Max-Equation MSE and Sum-Equation MSE in a network with $K=2$ and $N_r=N_k=2, \forall k$.}
\end{figure}
\begin{figure}[p]
\centering
\center
\includegraphics[width =6in]{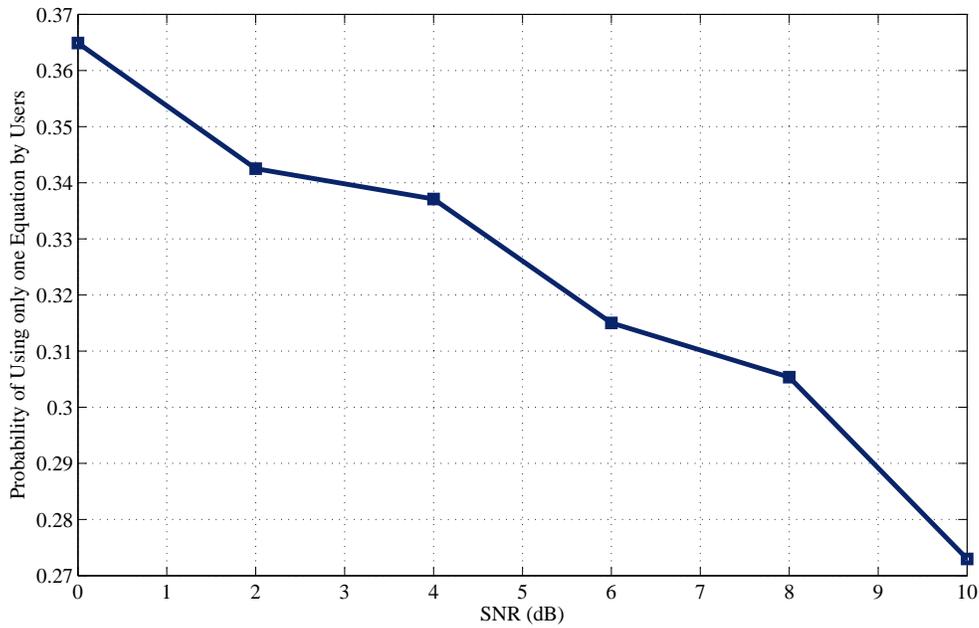}
\caption{The probability of using only one equstion by the users in a network with $K=2$ and $N_r=N_k=2, \forall k$.}
\end{figure}
%\newpage
\begin{figure}[p]
\centering
\center
\includegraphics[width =6in]{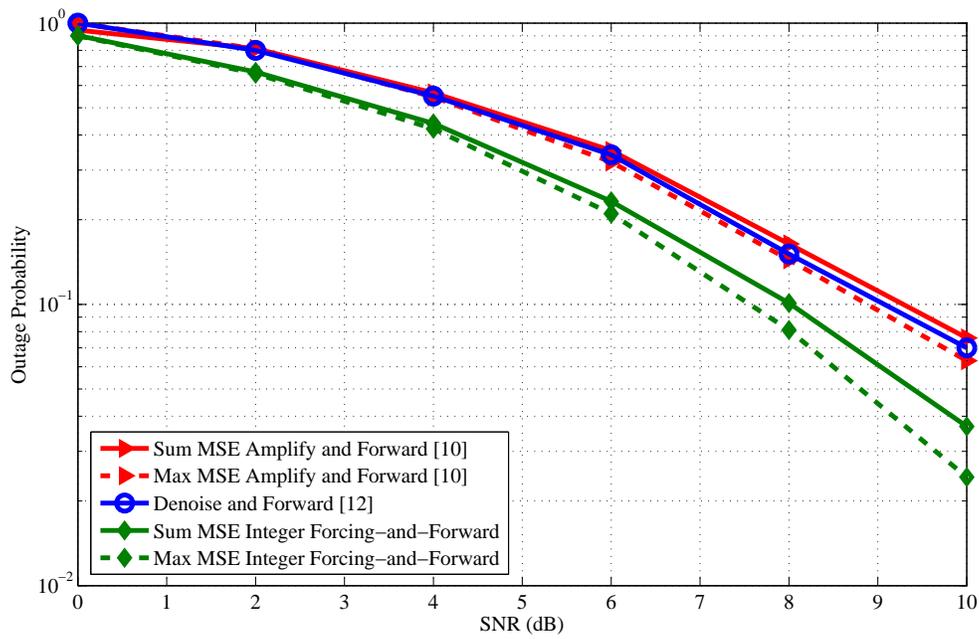}
\caption{The outage probability of the proposed scheme in comparison with conventional schemes in a network with $K=2$ and $N_r=N_k=2, \forall k$ and $R_t = 1 $bit/channel use.}
\end{figure}

%\newpage

\begin{figure}[p]
\centering
\center
\includegraphics[width =6in]{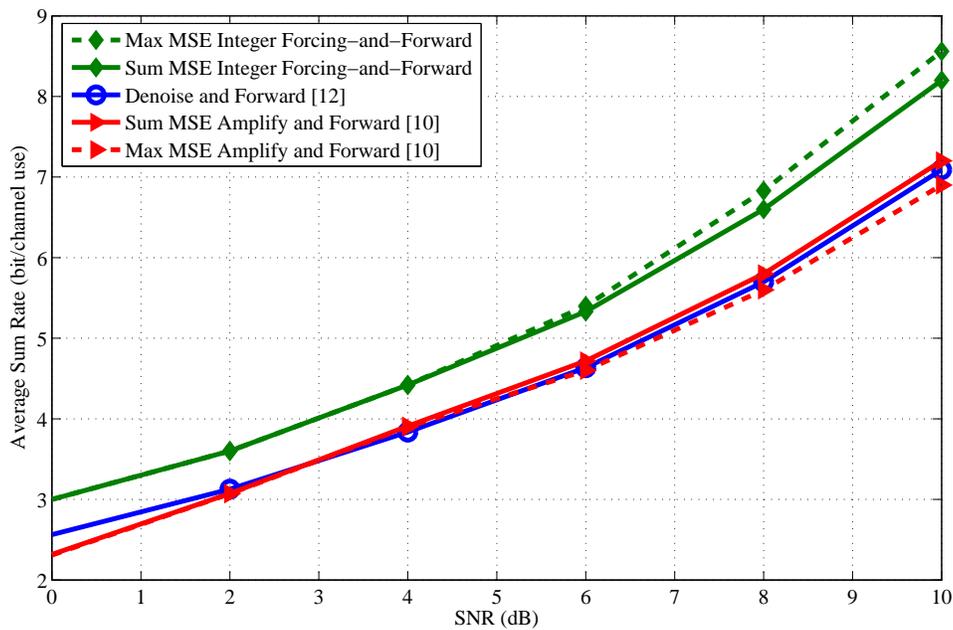}
\caption{The average sum rate of the proposed scheme in comparison with conventional schemes in a network with $K=2$ and $N_r=N_k=2, \forall k$.}

\end{figure}

%\newpage

\begin{figure}[p]
\centering
\center
\includegraphics[width =6in]{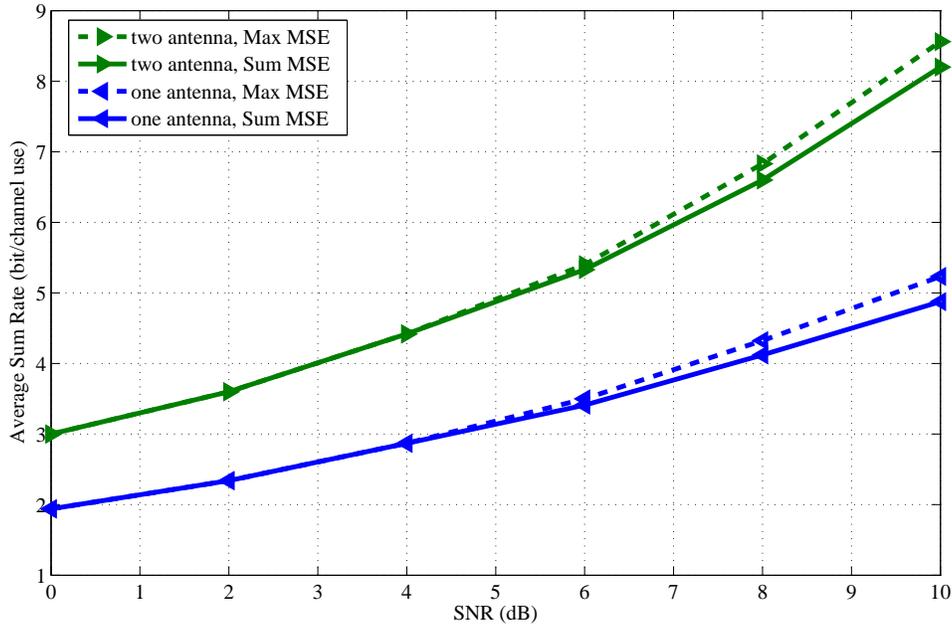}
\caption{The average sum rate of the proposed scheme with N antennas on each node, i.e. $N_r=N_k=N, \forall k$, in a network with $K=2$.}

\end{figure}

%\newpage

\begin{figure}[p]
\centering
\center
\includegraphics[width =6in]{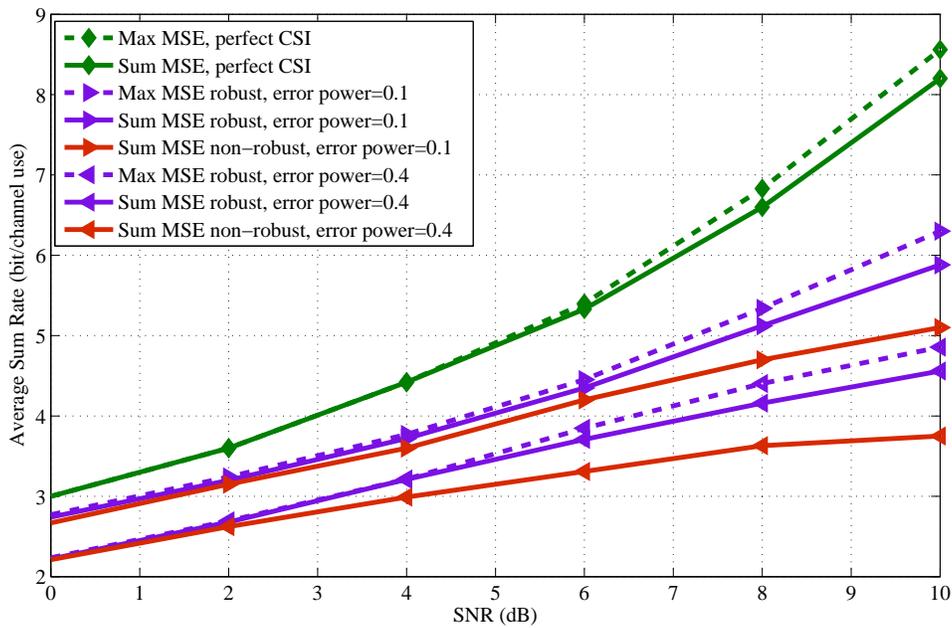}
\caption{The average sum rate of the proposed scheme for the robust and non-robust design in a network with $K=2$ and $N_r=N_k=2, \forall k$.}

\end{figure}
\newpage
\appendices

\section{Proof of Theorem 3}
With expanding, we have
\begin{eqnarray}
\mathbb{E}\left\{{{{\left| {\left| {\mathbf{b}_k^*\mathop \sum \limits_{k = 1}^K {\mathbf{e}_k}{\mathbf{V}_k}{\mathbf{s}_k}} \right|} \right|}^2}}\right\}&=& \mathbb{E}\left\{ {\left( {\mathbf{b}_k^*\mathop \sum \limits_{l = 1}^K {\mathbf{e}_l}{\mathbf{V}_l}{\mathbf{s}_l}} \right){{\left( {\mathbf{b}_k^*\mathop \sum \limits_{l = 1}^K {\mathbf{e}_l}{\mathbf{V}_l}{\mathbf{s}_l}} \right)}^*}} \right\} \nonumber\\ &=& \mathbb{E}\left\{ \mathbf{b}_k^*\mathop \sum \limits_{l = 1}^K {\mathbf{e}_l}{\mathbf{V}_l}{\mathbf{s}_l}\mathop \sum \limits_{l = 1}^K \mathbf{s}_l^*\mathbf{V}_l^*\mathbf{e}_l^*{\mathbf{b}_k}\right\}.
\end{eqnarray}
Since $\mathbb{E}\left\{ {{\mathbf{s}_l}\mathbf{s}_l^*} \right\} = 1$ and $\mathbb{E}\left\{ {{\mathbf{s}_l}\mathbf{s}_{\hat l}^*} \right\} = 0, \forall l \ne \hat l$, this leads to
\begin{eqnarray}
\mathbb{E}\left\{{{{\left| {\left| {\mathbf{b}_k^*\mathop \sum \limits_{k = 1}^K {\mathbf{e}_k}{\mathbf{V}_k}{\mathbf{s}_k}} \right|} \right|}^2}}\right\} = \mathbb{E}\left\{ {\mathbf{b}_k^*\mathop \sum \limits_{l = 1}^K {\mathbf{e}_l}{\mathbf{V}_l}\mathbf{V}_l^*\mathbf{e}_l^*{\mathbf{b}_k}} \right\}.
\end{eqnarray}
On the other hand, for any random vector $\mathbf{x}$ with mean $\mathbf{m}$ and covariance $\mathbf{M}$, and an matrix $\mathbf{A}$, we have [17]:
\begin{eqnarray}
\mathbb{E}\left\{ {{\mathbf{x}^*}{\mathbf{A}^*}\mathbf{A}\mathbf{x}} \right\} = \text{Tr}\left( {\mathbf{A}\mathbf{M}{\mathbf{A}^*}} \right) + {\mathbf{m}^*}{\mathbf{A}^*}\mathbf{A}\mathbf{m}.
\end{eqnarray}
From (88) and the fact that $\mathbb{E}\left\{ {{\mathbf{e}_k}\mathbf{e}_k^*} \right\} = \sigma _h^2\mathbf{I}$ and $\mathbb{E}\left\{ {{\mathbf{e}_k}\mathbf{e}_{\hat k}^*} \right\} = 0, \forall k = \hat k$, we have
\begin{eqnarray}
\mathbb{E}\left\{ {{\mathbf{e}_l}{\mathbf{V}_l}\mathbf{V}_l^*\mathbf{e}_l^*} \right\} = \sigma _h^2\text{Tr}\left( {\mathbf{V}_l^*{\mathbf{V}_l}} \right)\mathbf{I}.
\end{eqnarray}
And therefore, this leads to
\begin{eqnarray}
\mathbb{E}\left\{ {\mathbf{b}_k^*\mathop \sum \limits_{l = 1}^K {\mathbf{e}_l}{\mathbf{V}_l}\mathbf{V}_l^*\mathbf{e}_l^*{\mathbf{b}_k}} \right\} = \mathbf{b}_k^*\left( {\sigma _h^2\mathop \sum \limits_{l = 1}^K \text{Tr}\left( {\mathbf{V}_l^*{\mathbf{V}_l}} \right)\mathbf{I}} \right){\mathbf{b}_k} = \sigma _h^2\mathop \sum \limits_{l = 1}^K \text{Tr}\left( {\mathbf{V}_l^*{\mathbf{V}_l}} \right){\left| {\left| {{\mathbf{b}_k}} \right|} \right|^2}.
\end{eqnarray}
So, the theorem is proved.

\section{Proof of Theorem 4}

From (62), the optimum value of $\mathbf{b}_k$ is obtained by minimizing the following function:
\begin{eqnarray}
f\left( {{\mathbf{b}_k},{\mathbf{a}_k}} \right) &=& {\left| {\left| {{{\mathbf{\hat H}}^*}{\mathbf{b}_k} - {\mathbf{a}_k}} \right|} \right|^2} + \frac{{\sigma _h^2}}{2}\mathop \sum \limits_{l = 1}^K \left( {\text{Tr}\left( {\mathbf{V}_l^*{\mathbf{V}_l}} \right) + \text{Tr}\left( {\mathbf{V}_{\bar l}^*{\mathbf{V}_{\bar l}}} \right)} \right){\left| {\left| {{\mathbf{b}_k}} \right|} \right|^2} + \frac{{\sigma _r^2}}{2}{\left| {\left| {{\mathbf{b}_k}} \right|} \right|^2}\nonumber\\ &=& {\left( {{{\mathbf{\hat H}}^*}{\mathbf{b}_k} - {\mathbf{a}_k}} \right)^*}\left( {{{\mathbf{\hat H}}^*}{\mathbf{b}_k} - {\mathbf{a}_k}} \right) + \left( {\frac{{\sigma _h^2}}{2}\mathop \sum \limits_{l = 1}^K \left( {\text{Tr}\left( {\mathbf{V}_l^*{\mathbf{V}_l}} \right) + \text{Tr}\left( {\mathbf{V}_{\bar l}^*{\mathbf{V}_{\bar l}}} \right)} \right) + \frac{{\sigma _r^2}}{2}} \right)\mathbf{b}_k^*{\mathbf{b}_k}\nonumber\\ &=& \mathbf{b}_k^*\mathbf{\hat H}{\mathbf{\hat H}^*}{\mathbf{b}_k} - 2\mathbf{b}_k^*\mathbf{\hat H}{\mathbf{a}_k} + \mathbf{a}_k^*{\mathbf{a}_k} + \left( {\frac{{\sigma _h^2}}{2}\mathop \sum \limits_{l = 1}^K \left( {\text{Tr}\left( {\mathbf{V}_l^*{\mathbf{V}_l}} \right) + \text{Tr}\left( {\mathbf{V}_{\bar l}^*{\mathbf{V}_{\bar l}}} \right)} \right) + \frac{{\sigma _r^2}}{2}} \right)\mathbf{b}_k^*{\mathbf{b}_k}\nonumber\\ &=& \mathbf{b}_k^*\left( {\mathbf{\hat H}{{\mathbf{\hat H}}^*} + \frac{{\sigma _h^2}}{2}\mathop \sum \limits_{l = 1}^K \left( {\text{Tr}\left( {\mathbf{V}_l^*{\mathbf{V}_l}} \right) + \text{Tr}\left( {\mathbf{V}_{\bar l}^*{\mathbf{V}_{\bar l}}} \right)} \right)\mathbf{I} + \frac{{\sigma _r^2}}{2}\mathbf{I}} \right){\mathbf{b}_k} - 2\mathbf{b}_k^*\mathbf{\hat H}{\mathbf{a}_k} + \mathbf{a}_k^*{\mathbf{a}_k}.
\end{eqnarray}
The optimum value of $\mathbf{b}_k$ is the solution of
\begin{eqnarray}
\frac{{\partial f\left( {{\mathbf{b}_k},{\mathbf{a}_k}} \right)}}{{\partial {\mathbf{b}_k}}} = 2\left( {\mathbf{\hat H}{{\mathbf{\hat H}}^*} + \frac{{\sigma _h^2}}{2}\mathop \sum \limits_{l = 1}^K \left( {\text{Tr}\left( {\mathbf{V}_l^*{\mathbf{V}_l}} \right) + \text{Tr}\left( {\mathbf{V}_{\bar l}^*{\mathbf{V}_{\bar l}}} \right)} \right)\mathbf{I} + \frac{{\sigma _r^2}}{2}\mathbf{I}} \right){\mathbf{b}_k} - \mathbf{\hat H}{\mathbf{a}_k} = 0.
\end{eqnarray}
Hence,
\begin{eqnarray}
\mathbf{b}_k^* = \mathbf{a}_k^*{\mathbf{\hat H}^*}{\left( {\frac{{\sigma _r^2}}{2}\mathbf{I} + \frac{{\sigma _h^2}}{2}\left( {\mathop \sum \limits_{l = 1}^K \text{Tr}\left( {\mathbf{V}_l^*{\mathbf{V}_l}} \right) + \text{Tr}\left( {\mathbf{V}_{\bar l}^*{\mathbf{V}_{\bar l}}} \right)} \right)\mathbf{I} + \mathbf{\hat H}{{\mathbf{\hat H}}^*}} \right)^{ - 1}}\mathbf{\hat H}.
\end{eqnarray}
Thus, the theorem is proved.


\begin{thebibliography}{1}
\bibitem{ref1}
S. Zhang, S.C. Liew, and P.P. Lam, "Hot topic: physical-layer network coding," in \emph{Proc. of International Conference on Mobile Computing and Networking}, (New York, USA), 2006.
\bibitem{ref2}
P. Popovski and H. Yomo, "Physical Network Coding in Two-Way Wireless Relay Channels," \emph{IEEE International Conference on Communications}, (Glasgow, UK), 2007.
\bibitem{ref3}
Q.F. Zhou, Y. Li, F.C.M. Lau, and B. Vucetic, "Decode-and-Forward Two-Way Relaying with Network Coding and Opportunistic Relay Selection," \emph{IEEE Trans. Commun.}, vol. 58, no. 11, pp. 3070-3076, 2010.
\bibitem{ref4}
T. Cui, T. Ho, and J. Kliewer, "Memoryless relay strategies for two-way relay channels," \emph{IEEE Trans. Commun.}, vol. 57, no. 10, pp. 3132-3143, 2009.
\bibitem{ref5}
W. Nam, S.Y. Chung, and Y.H. Lee, "Capacity of the Gaussian two-way relay channel to within 1/2 bit," \emph{ IEEE Trans. Inf. Theory}, vol. 56, no. 11, pp. 5488-5494, 2010.
\bibitem{ref6}
B. Nazer and M. Gastpar, "Compute-and-Forward: Harnessing Interference Through Structured Codes," \emph{IEEE Trans. Inf. Theory}, vol. 57, no. 10, pp. 6463-6486, 2011.
\bibitem{ref7}
U. Erez and R. Zamir, "Achieving 1/2 log (1+SNR) on the AWGN channel with lattice encoding and decoding," \emph{IEEE Trans. Inf. Theory}, vol. 50, no. 10, pp. 2293-2314, 2004.
\bibitem{ref8}
B. Nazer and M. Gastpar, "Reliable Physical Layer Network Coding," \emph{Proceedings of the IEEE}, vol. 99, no. 3, pp. 438-460, 2011.
\bibitem{ref9}
J. Zhan, B. Nazer, U. Erez, and M. Gastpar, "Integer-Forcing Linear Receivers," \emph{IEEE Trans. Inf. Theory}, vol. PP, no. 99, pp. 1, 2014.
\bibitem{ref10}
M. Zhang, H. Yi, H. Yu, H. Luo, and W. Chen, "Joint Optimization in Bidirectional Multi-User Multi-Relay MIMO Systems: Non-Robust and Robust Cases," \emph{IEEE Trans. Vehicular Tech.}, vol. 62, no. 7, pp. 3228-3244, 2013.  
\bibitem{ref11}
Z. Ding and H.V. Poor, "A General Framework of Precoding Design for Multiple Two-Way Relaying Communications," \emph{IEEE Trans. Signal Process.}, vol. 61, no. 6, pp. 1531-1535, 2013.
\bibitem{ref12}
Z. Zhao, M. Peng, Z. Ding, W. Wang, and H.H. Chen, "Denoise-and-Forward Network Coding for Two-Way Relay MIMO Systems," \emph{IEEE Trans. Vehicular Tech.}, vol. 63, no. 2, pp. 775-788, 2014. 
\bibitem{ref13}
S.M. Azimi-abarghouyi, M. Hejazi, and M. Nasiri-Kenari, "Compute-and-Forward Two-Way Relaying," \emph{IET Commun.}, to appear, 2014, available online: http://arxiv.org/abs/1408.2855
\bibitem{ref14}
N. Lee, J.B. Lim, and J. Chun, "Degrees of Freedom of the MIMO Y Channel: Signal Space Alignment for Network Coding," \emph{IEEE Trans. Inf. Theory}, vol. 56, no. 7, pp. 3332-3342, 2010. 
\bibitem{ref15}
R. Zhou, Z. Li, C. Wu, and C. Williamson, "Signal Alignment: Enabling Physical Layer Network Coding for MIMO Networking," \emph{IEEE Trans. Wireless Commun.}, vol. 12, no. 6, pp. 3012-3023, 2013. 
\bibitem{ref16}
S.J. Kim, P. Mitran, and V. Tarokh, "Performance Bounds for Bi-Directional Coded Cooperation Protocols," \emph{IEEE Trans. Info. Theory}, vol. 54, no. 1, pp. 5235-5241, 2008.
\bibitem{ref17}
K.B. Petersen and M.S. Pedersen, \emph{The Matrix Cookbook}. Technical University of Denmark, 2006.
\bibitem{ref18}
L. Wei and W. Chen, "Compute-and-Forward Network Coding Design over Multi-Source Multi-Relay Channels," \emph{IEEE Trans. Wireless Commun.}, vol. 11, no. 9, pp. 3348-3357, 2012.
\bibitem{ref19}
L. Wei and W. Chen, "Integer-Forcing Linear Receiver Design with Slowest Descent Method," \emph{IEEE Trans. Wireless Commun.}, vol. 12, no. 6, pp. 2788-2796, 2013. 
\bibitem{ref20}
D. Tse and P. Viswanath, \emph{Fundamentals of Wireless Communication}. Cambridge: Cambridge Univ. Press, 2005.
\bibitem{ref21}
M. Varanasi and T. Guess, "Optimum decision feedback multiuser equalization with successive decoding achieves the total capacity of
the Gaussian multiple-access channel," \emph{in Proceedings of the 31st Asilomar Conference on Signals, Systems and Computers}, 1997.
\bibitem{ref22}
S. Boyd and L. Vandenberghe, \emph{Convex Optimization}. Cambridge: Cambridge Univ. Press, 2004.
\bibitem{ref23}
J. F. Sturm, "Using SeDuMi 1.02, a MATLAB tool for optimization over symmetric cones," \emph{Optim. Methods Softw.}, vol. 11-12, pp. 625-653, 1999.

\end{thebibliography}
\end{document}